\def\begeq{\begin{equation}}
\def\endeq{\end{equation}}
\def\begp{\begin{proposition}}
\def\endp{\end{proposition}}
\def\begl{\begin{lemma}}
\def\endl{\end{lemma}}
\def\begt{\begin{theorem}}
\def\endt{\end{theorem}}
\def\begd{\begin{definition}}
\def\endd{\end{definition}}
\newtheorem{lemma}{Lemma}[section]
\newtheorem{corollary}{Corollary}[section]
\newtheorem{theorem}{Theorem}
\newtheorem{proposition}{Proposition}[section]
\newtheorem{definition}{Definition}[section]
\newtheorem{proof}{Proof}[section]
\newcommand{\la}{\langle}
\newcommand{\ra}{\rangle}
\newcommand{\tr}{\text{tr}}
\newcommand{\ii}{\mathbf{i}}
\newcommand{\SP}{\text{ }}
\newcommand{\nl}{\newline}
\newcommand{\rw}{\rightarrow}
\newcommand{\ONE}{\mathbbm{1}}
\newcommand{\RR}{\mathbbm{R}}
\newcommand{\CC}{\mathbbm{C}}
\newcommand{\PP}{\mathbbm{P}}
\newcommand{\ee}{\mathcal{E}}
\newcommand{\dd}{\mathcal{D}}
\begin{document}


\title{Certifying an irreducible 1024-dimensional photonic state using refined dimension witnesses}



\date{\today}

\author{Edgar A. Aguilar}
\email[These authors contributed equally to this work.]{}
\affiliation{Institute of Theoretical Physics and Astrophysics, National Quantum Information Centre, Faculty of Mathematics, Physics and Informatics, University of Gdansk, 80-952 Gdansk, Poland}

\author{M\'at\'e Farkas}
\email[These authors contributed equally to this work.]{}
\affiliation{Institute of Theoretical Physics and Astrophysics, National Quantum Information Centre, Faculty of Mathematics, Physics and Informatics, University of Gdansk, 80-952 Gdansk, Poland}

\author{Daniel Mart\'inez}
\affiliation{Departamento de F\'{\i}sica, Universidad de Concepci\'on, 160-C Concepci\'on, Chile}
\affiliation{Millennium Institute for Research in Optics, Universidad de Concepci\'on, 160-C Concepci\'on, Chile}

\author{Mat\'ias Alvarado}
\affiliation{Departamento de F\'{\i}sica, Universidad de Concepci\'on, 160-C Concepci\'on, Chile}
\affiliation{Millennium Institute for Research in Optics, Universidad de Concepci\'on, 160-C Concepci\'on, Chile}

\author{Jaime Cari\~{n}e}
\affiliation{Departamento de F\'{\i}sica, Universidad de Concepci\'on, 160-C Concepci\'on, Chile}
\affiliation{Millennium Institute for Research in Optics, Universidad de Concepci\'on, 160-C Concepci\'on, Chile}

\author{Guilherme B. Xavier}
\affiliation{Millennium Institute for Research in Optics, Universidad de Concepci\'on, 160-C Concepci\'on, Chile}
\affiliation{Departamento de Ingenier\'ia El\'ectrica, Universidad de Concepci\'on, 160-C Concepci\'on, Chile}
\affiliation{Institutionen f\"{o}r Systemteknik, Link\"{o}pings Universitet, 581 83 Link\"{o}ping, Sweden}

\author{Johanna F. Barra}
\affiliation{Departamento de F\'{\i}sica, Universidad de Concepci\'on, 160-C Concepci\'on, Chile}
\affiliation{Millennium Institute for Research in Optics, Universidad de Concepci\'on, 160-C Concepci\'on, Chile}

\author{Gustavo Ca\~{n}as}
\affiliation{Departamento de F\'{\i}sica, Universidad del Bio-Bio, Av. Collao 1202, Concepci\'on, Chile}

\author{Marcin Paw{\l}owski}
\affiliation{Institute of Theoretical Physics and Astrophysics, National Quantum Information Centre, Faculty of Mathematics, Physics and Informatics, University of Gdansk, 80-952 Gdansk, Poland}

\author{Gustavo Lima}
\affiliation{Departamento de F\'{\i}sica, Universidad de Concepci\'on, 160-C Concepci\'on, Chile}
\affiliation{Millennium Institute for Research in Optics, Universidad de Concepci\'on, 160-C Concepci\'on, Chile}


\begin{abstract}
We report on a new class of dimension witnesses, based on quantum random access codes, which are a function of the recorded statistics and that have different bounds for all possible decompositions of a high-dimensional physical system. Thus, it certifies the dimension of the system and has the new distinct feature of identifying whether the high-dimensional system is decomposable in terms of lower dimensional subsystems. To demonstrate the practicability of this technique we used it to experimentally certify the generation of an irreducible 1024-dimensional photonic quantum state. Therefore, certifying that the state is not multipartite or encoded using non-coupled different degrees of freedom of a single photon. Our protocol should find applications in a broad class of modern quantum information experiments addressing the generation of high-dimensional quantum systems, where quantum tomography may become intractable.
\end{abstract}


\pacs{42.50.Xa,42.50.Ex,03.65.Ta}


\maketitle


{\em Introduction.---} The dimension $d$ of physical systems is a fundamental property of any model, and its operational definition arguably reflects the evolution of physics itself. In quantum mechanics, it can be seen as a key resource for information processing since higher dimensional systems provide advantages in several protocols of quantum computation \cite{ReviewQComp} and quantum communications \cite{ReviewQComm}. In the field of quantum foundations, a recent proposal suggests that in order to understand and create macroscopic quantum states it will be necessary to take advantage of high-dimensional systems \cite{Gisin18}. Therefore, it is natural to understand why there is an growing strive to coherently control quantum systems of large dimensions \cite{DIDW14,QKD16,Gao2010,Dada2011,Kwiat2005,Fickler2012,Krenn2014,Yao2012,Blatt2011,Pan2016,Barends2014,Malik2016,Fickler2016}. Nonetheless, such new technological advances require the simultaneous development of practical methods to certify that the sources are truly producing the required quantum states. In principle, one can rely on the process of quantum tomography \cite{DL01,DML04,JKMW01,TNWM02,Lima10,Lima11,Dardo15}, but this approach quickly becomes intractable in higher dimensions as at least $d^2$ measurements are required \cite{wootters}.

To address this problem, the concept of dimension witness (DW) was introduced. The original idea was based on the violation of a particular Bell inequality \cite{BPAGMS}, but then extended to the more practical prepare-and-measure scenario \cite{GBHA}. In general, DWs are defined as linear functions of a few measurement outcome probabilities and have classical and quantum bounds defined for each considered dimension \cite{DIDW14,BPAGMS,GBHA,Ahrens2012,Hendrych2012,Brunner2013,Bowles2014}. Thus, they allow for the device-independent certification of the minimum dimension required to describe a given physical system, and can also infer if it is properly described by a coherent superposition of logical states. Nevertheless, these tests do not provide information about the composition of the system, which is crucial for high-dimensional quantum information processing. This point has been recently investigated by W. Cong et al. \cite{Scarani17}, where they introduced the concept of an irreducible dimension witness (IDW) to certify the presence of an irreducible 4-dimensional system. Specifically, their IDW distinguishes if the observed data is created by one pair of entangled ququarts, or two pairs of entangled qubits measured under sequential adaptive operations and classical communication.

Here we introduce a new class of DWs, namely gamut DWs, which certifies the dimension of the system and has the new distinct feature of identifying whether any high-dimensional quantum system is irreducible. It is based on quantum random access codes (QRACs), which is a communication task defined in a prepare-and-measure scenario \cite{Ambainis99}. To demonstrate the practicability of our new technique we experimentally certify the generation of an irreducible 1024-dimensional photonic quantum system encoded onto the transverse momentum of single photons transmitted over programmable diffractive optical devices \cite{QKD16,LimaPRL05,Lima10,Lima11,Dardo15,Lima09,MulticoreFibers}. To our knowledge, our work represents an increase of about two orders of magnitude to any reported experiment using path qudits. From the recorded data one observes a violation of the bounds associated to all possible decompositions of a 1024-dimensional quantum system, thus, certifying that the generated state is not encoded using non-coupled different degrees of freedom of a photon, e.g., polarization and momentum. Nonetheless, our method is broadly relevant and should also find applications in multipartite photonic scenarios and new platforms for the fast-growing field of experimental high-dimensional quantum information processing.

{\em Gamut dimension witness.---}
As stated earlier, the protocol we use in our main theorem is based on QRACs. Thus, we first give a
brief description (see e.g. \cite{Ambainis99} for more details) of this task (see Fig. \ref{fig1}): one of the parties, Alice, receives two input dits: $x_1$ and $x_2\in\{1,\ldots,d\}$. She is then allowed to send one $d$-dimensional (quantum) state,
$\rho_{x_1x_2}$ to Bob, depending on her input. Bob is then given a bit $y\in\{1,2\}$ and his task is to guess
$x_y$. He does so by performing a quantum measurement $M^y$ and a classical
post-processing function $\mathcal{D}^y$. As a result, he outputs $b\in\{1,
\ldots,d\}$.

\begin{figure}
  \centering
  \includegraphics[width=0.45 \textwidth]{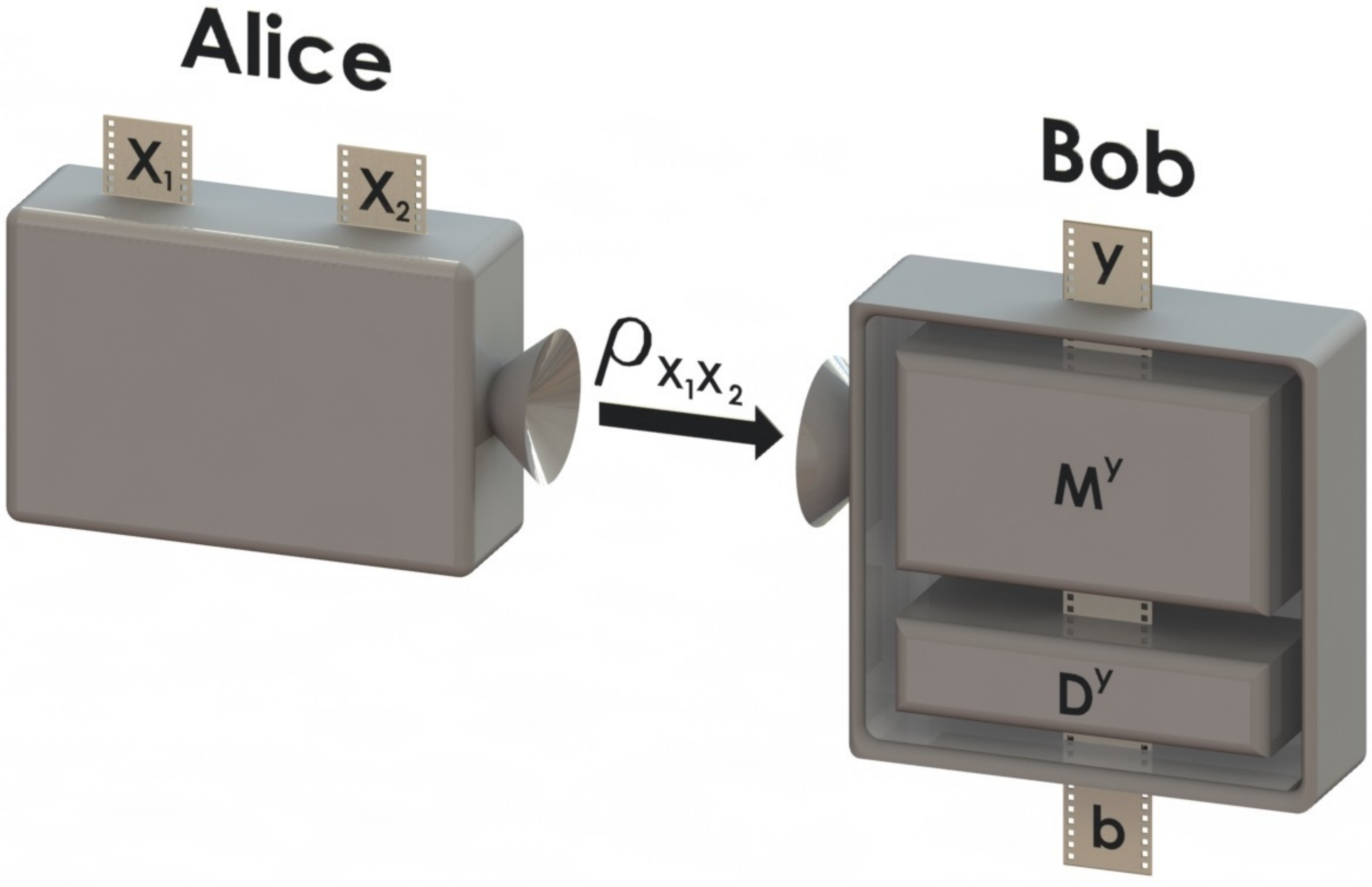}
  \caption{Our d-dimensional QRACs scenario. Alice receives the input dits $x_1$ and $x_2\in\{1,\ldots,d\}$, and prepares the state $\rho_{x_1x_2}$ which is sent to Bob. He receives the input $y\in\{1,2\}$, which defines the quantum measurement $M^y$ and the classical post-processing function $\dd^y$ to be applied to $\rho_{x_1x_2}$. As a result, Bob outputs $b$.}\label{fig1}
\end{figure}

For a single round of the protocol, the success probability is $\mathbb{P}
(b=x_y~|~x_1,x_2,y)$. As a figure or merit over many rounds with uniformly random
inputs, we employ the \textit{average success probability} (ASP): $\bar{p}=\frac{1}{2d^2}\sum_{x_1,x_2,y}\mathbb{P}(b=x_y~|~x_1,x_2,y).$
Thus, we are looking for the maximal value of $\bar{p}$, optimizing over all
possible encoding and decoding strategies.
It was proven \cite{Czechlewski18} that for classical strategies (i.e. classical states
and decoding functions), the optimal ASP is $\bar{p}
_{C_d} =\frac12(1+\frac1d)$. In the quantum case, the optimal strategy is
reached by using mutually unbiased bases (MUBs) for encoding and decoding
\cite{ABMP,Farkas18}, and the ASP is $\bar{p}_{Q_d}=\frac12(1+\frac{1}{\sqrt{d}})$.

Now, we estimate the optimal ASPs for composite systems, for all possible product structures, defined as follows:
\begin{definition}\label{def:product}
For a fixed $d$, we define a \textit{product structure} by the set $\big\{r,
\{d_k\},\{\alpha_k\}\big\}$. For a composite system, $d=\prod_{k=1}^r d_k$, where $d_k$ is the dimension of each subsystem and $r$ is the number of subsystems. The state of the composite system can be written as $\rho=\rho_{\alpha_1}^1\otimes\rho_
{\alpha_2}^2\otimes\cdots\otimes\rho_{\alpha_r}^r$. Here, $\alpha_k=
\text{c}$ and $\alpha_k=\text{q}$ are used to denote the ``classical'' and ``quantum'' nature of the subsystem, respectively. Then, $\rho_c^k\in\Delta_{d_k-1}$ is a classical state, and $\rho_q^k\in\mathcal{S}
(\mathbb{C}^{d_k})$ is a quantum state.
\end{definition}

Consider a set of measurement and state preparation settings, and
fix the total dimension of the physical system in question. We call a linear
function on the measurement outcome probabilities a gamut dimension witness (GDW), if
its extremal values for all possible product structures are different. For example, in $d=4$, a GDW has different extremal values for a ququart, two qubits, one qubit and a bit,
and a quart. The main theoretical result of this work is to demonstrate that d-dimensional QRACs can be used as GDWs for d-dimensional physical systems. To highlight this, we set it as a theorem.

\begin{theorem}
d-dimensional QRACs serve as gamut dimension witnesses using the ASP function.
\end{theorem}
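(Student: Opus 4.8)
The plan is to collapse the composite problem into independent single-subsystem problems via a factorization of the ASP, then compute the extremal ASP of each product structure and prove that these values are all distinct. First I would write each input as a tuple, $x_i=(x_i^{(1)},\dots,x_i^{(r)})$ with $x_i^{(k)}\in\{1,\dots,d_k\}$, so that for a product-structure state $\rho_{x_1x_2}=\bigotimes_k\rho^k_{x_1^{(k)}x_2^{(k)}}$ the event $b=x_y$ means recovering every component. Grouping Bob's outcomes, for each $y$ the relevant quantity is the minimum-error discrimination value $V^y=\max_{\{M_b\}}\sum_b\tr[\sigma_b^y M_b]$ of the ensemble $\sigma_b^y=\sum_{x:\,x_y=b}\rho_{x_1x_2}=\bigotimes_k\sigma^{y,k}_{b^{(k)}}$. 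The crucial step is that $V^y$ factorizes: a product measurement gives $V^y\ge\prod_kV_k^y$, while the SDP dual of discrimination provides operators $K_k\succeq\sigma^{y,k}_{b^{(k)}}$ for all $b^{(k)}$ with $\tr[K_k]=V_k^y$, and since $A\succeq A'\succeq0$ and $B\succeq B'\succeq0$ imply $A\otimes B\succeq A'\otimes B'$, the dual-feasible operator $\bigotimes_kK_k$ dominates every $\sigma_b^y$ and yields $V^y\le\prod_kV_k^y$. Hence $V^y=\prod_kV_k^y$, and writing $s_k^y=V_k^y/d_k^2$ for the per-subsystem success on input $y$,
\[
\bar p=\tfrac12\Big(\prod_k s_k^1+\prod_k s_k^2\Big),\qquad (s_k^1,s_k^2)\in\mathcal R_k,
\]
with $\mathcal R_k$ the achievable success region of a single $d_k$-dimensional QRAC of the prescribed type.

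With this reduction in hand I would compute the extremal value of each structure. Since classical systems compose trivially, I would first merge all classical factors into one, so that up to relabeling every structure carries at most one classical part; this collapses ``two bits'' into ``a quart'' and reproduces the four inequivalent $d=4$ structures listed above. I would then feed in the single-subsystem regions: the classical region is the polytope $s^1+s^2\le1+1/d_k$, and the quantum region is the convex, swap-symmetric set whose symmetric boundary point is the optimal value $\tfrac12(1+1/\sqrt{d_k})$ realized by the MUB strategy cited earlier. The extremal ASP of the structure is then $\max\tfrac12(\prod_k s_k^1+\prod_k s_k^2)$ over $\prod_k\mathcal R_k$.

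The main obstacle is this maximization. The objective is a sum of two products and is not concave, so its maximizer need not be the symmetric point $s_k^1=s_k^2$: for structures that mix quantum and classical parts, or that have many factors, biasing every subsystem toward a single input beats the symmetric product $\prod_k\tfrac12(1+1/\sqrt{d_k})$. For example a qubit tensored with a classical bit attains $\tfrac{3+\sqrt5}{8}$, exceeding the symmetric $\tfrac12(1+\tfrac1{\sqrt2})\cdot\tfrac34$. The true extremal value is thus a genuine biased optimum, which I would compute structure by structure, using the swap symmetry to restrict to the two candidate families ``all factors biased alike'' and ``balanced.''

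Finally, to establish the gamut property I would show the extremal values are pairwise distinct and that the irreducible value $\tfrac12(1+1/\sqrt d)$ is the largest among them, the resulting gap being exactly what certifies irreducibility from the measured ASP. For the experimental case $d=2^{10}$ there are only finitely many structures --- labelled partitions of the exponent with all classical parts merged --- so distinctness and ordering can be verified by direct evaluation. I expect the hardest part to be injectivity for general $d$: the extremal values are algebraic numbers whose irrational parts are generated by the quantum factors and by the biased optima, so a clean argument would isolate these contributions via the linear independence of the relevant square roots over $\QQ$ together with the unique factorization of $d$, ruling out accidental coincidences between distinct product structures.
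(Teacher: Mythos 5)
Your first stage is a genuinely different route from the paper's, and in one respect a stronger one. The paper proves that classical post-processing (Lemma \ref{lemma:id}) and sequential adaptive measurements (Lemma \ref{lemma:NonAdapt}) are unnecessary, but it \emph{assumes} that Bob's measurements inherit the product structure, excluding entangling measurements by fiat (with only numerical evidence cited in a footnote). Your dual-certificate argument --- take $K_k\succeq\sigma^{y,k}_{b^{(k)}}$ optimal for each factor and observe that $\bigotimes_k K_k$ is feasible for the joint discrimination SDP --- bounds \emph{all} global POVMs at once, so adaptive and entangling decodings are subsumed rather than assumed away; the tensor-monotonicity step it rests on is correct. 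One caveat: the tensorization $\sigma_b^y=\bigotimes_k\sigma^{y,k}_{b^{(k)}}$ holds only if the $k$-th factor of Alice's state depends on the inputs solely through $(x_1^{(k)},x_2^{(k)})$. In a general product structure each factor $\rho^k$ may depend on the whole pair $(x_1,x_2)$, in which case $\sigma_b^y$ is separable but not a tensor product and the certificate argument no longer applies. The paper makes the same restriction implicitly, but in your proof it is load-bearing, so it must be stated as an assumption or justified.

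The genuine gap is in the second stage. Everything quantitative hinges on knowing the single-subsystem regions $\mathcal{R}_k$ exactly, and you pin down the quantum region only as ``the convex, swap-symmetric set whose symmetric boundary point is $\frac12(1+1/\sqrt{d_k})$.'' That does not determine a set: the halfspace $\{(s^1,s^2): s^1+s^2\le 1+1/\sqrt{d_k}\}$ is convex, swap-symmetric and has the same symmetric boundary point, yet it strictly contains the true region (the true boundary passes through $(1,1/d_k)$, the halfspace's through $(1,1/\sqrt{d_k})$) and would yield inflated, incorrect biased optima. Since --- as you yourself observe --- the extremal value of every composite structure is attained at strongly biased points, the full boundary curve is indispensable: your own value $(3+\sqrt{5})/8$ for a qubit times a classical bit, and the experimentally decisive bound $\bar{p}_{Q_{512}Q_2}=0.500980$, can only be computed from it. This is exactly what the paper's Lemma \ref{lemma:magic} supplies: a majorization/concavity argument yielding $\mathcal{M}^q_d(z)=1-\frac{d-1}{d}\bigl(\sqrt{z}-\sqrt{(1-z)/(d-1)}\bigr)^2$, which simultaneously shows that mutually unbiased bases are optimal at \emph{every} bias $z$, not only at the symmetric point. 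Without this lemma (or an equivalent derivation) your structure-by-structure evaluation cannot produce numbers, so neither the ordering $Q_{1024}>Q_{512}Q_2>\cdots$ nor the gamut property can be established. (Your concern about pairwise distinctness for general $d$ is fair, but the paper, too, settles it only by direct evaluation at $d=1024$, so that part is not a deficit relative to the paper.)
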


The proof of this theorem and all related lemmas can be found
in the Supplemental Material \cite{SupMat}, which includes Refs.
\cite{ALMO,THMB}. Let us now sketch the
main tools for proving the theorem. They help to understanding the
problem, and can be independently used. Note that the following lemmas apply in more general QRAC scenarios as well \cite{SupMat}.

We assume that Bob's measurements have the same product structure as the state generated by Alice. That is, we exclude that Bob's state certification would use entangling measurements. The motivation here is to rule out sequential uses of lower dimensional systems as a way to simulate higher dimensional statistics, e.g.\ to discriminate between $n$ sequential uses of a $d$ dimensional system, and a $d^n$ dimensional system. A physical motivation for this assumption is to think that if Alice cannot couple a particular set of degrees of freedom (e.g.\ polarization and momentum), then neither can Bob because he has access to the same equipment as Alice does \footnote{We also note that numerical evidence on small dimensional examples suggests that if the quantum states are in a product form, then entangling measurements do not improve the ASP.}.

Therefore, the most general strategy for decoding the $d$-dimensional system $\rho=\rho^1\otimes\rho^2\otimes
\cdots\otimes\rho^r$ is as follows: Bob performs sequential adaptive measures on the subsystems in the sense of \cite{Scarani17}. He starts by measuring subsystem $\rho^1$ to obtain the outcome $b^1$. Then, his choice of the measurement to be performed in $\rho^2$ may depend on $b^1$. Successively, each measurement on $\rho^k$ can depend on all the measurement outcomes obtained previously. After performing all measurements, Bob feeds
the obtained outcomes to a classical post-processing function, and outputs his
final guess on $x_y$, which is $b=\mathcal{D}^y(b^1b^2\ldots b^r)$.

The bounds of the GDW in this general scenario are extremely hard to obtain. The following results
help making the analysis easier. First, it is argued in \cite{Ambainis99} that
in an optimal strategy, it is enough to use encoded pure states. Similarly, it has been shown that rank 1 projective measurements (explicitly: mutually unbiased bases) optimize two-input QRACs \cite{Farkas18}. Thus,in the following we only deal with pure states for both Alice and Bob. Additionally,
we can eliminate classical post-processing functions:
\begin{lemma}\label{lem:identity}
In QRACs, for optimality of the ASP, there is no need for classical
post-processing functions.
\end{lemma}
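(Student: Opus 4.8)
The plan is to prove the lemma by absorbing the classical post-processing into Bob's measurement, thereby showing that allowing a nontrivial $\dd^y$ can never raise the optimal ASP. One direction is immediate: since the identity relabelling is itself an admissible post-processing function, the optimum taken over strategies that use post-processing is at least as large as the optimum over strategies that do not. The entire content is therefore the reverse inequality, for which I would take an arbitrary decoding strategy $(M^y,\dd^y)$ and exhibit a post-processing-free strategy reproducing its ASP exactly.

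Concretely, writing Bob's measurement for input $y$ as a POVM $\{M^y_o\}_o$ with outcome label $o$ and his guess as $b=\dd^y(o)$, I would define the coarse-grained POVM $\tilde M^y_b=\sum_{o:\,\dd^y(o)=b}M^y_o$. Each $\tilde M^y_b$ is positive and $\sum_b\tilde M^y_b=\sum_o M^y_o=\ONE$, so $\{\tilde M^y_b\}_b$ is a legitimate measurement whose outcomes already live in the guess alphabet $\{1,\ldots,d\}$. By linearity of the trace the single-round success probability is unchanged, $\PP(b=x_y\,|\,x_1,x_2,y)=\sum_{o:\,\dd^y(o)=x_y}\tr(\rho_{x_1x_2}M^y_o)=\tr(\rho_{x_1x_2}\tilde M^y_{x_y})$, and hence so is the averaged figure of merit $\bar p$. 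Taking the supremum over all strategies gives the reverse inequality and, together with the trivial direction, the claimed equality of the two optima.

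The step I expect to require the most care is checking that this absorption respects the structural constraints imposed in the main argument, not merely the unstructured single-POVM picture. In the sequential-adaptive product-structure setting Bob applies $\dd^y$ only after the final subsystem has been measured, so for each fixed history $b^1\cdots b^{r-1}$ of earlier outcomes I would coarse-grain only the last measurement in the chain according to $b^r\mapsto\dd^y(b^1\cdots b^{r-1}b^r)$, leaving the product form and the adaptive dependence of the earlier measurements intact. The one genuine subtlety is that coarse-graining generically destroys rank-one projectivity, so the measurement produced this way need not be a MUB. This is harmless for the lemma as stated, since the claim is only that post-processing does not raise the optimum, and the cited optimality of rank-one projective (MUB) measurements already supplies a post-processing-free optimal strategy; combining the two shows that nothing is lost by discarding $\dd^y$ altogether.
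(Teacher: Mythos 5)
Your proposal is correct and follows essentially the same route as the paper: both absorb the post-processing function into the measurement by coarse-graining, $\tilde M^y_b=\sum_{o:\,\dd^y(o)=b}M^y_o$, verify positivity and normalization, and note that the single-round probabilities (hence the ASP) are unchanged, so nothing is lost by setting $\dd^y=\text{id}$. Your additional remarks on the adaptive product-structure setting and on the loss of rank-one projectivity are sound but go beyond what the paper's proof itself addresses.
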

Last, we note that:
\begin{lemma}\label{lem:nonadaptive}
In QRACs, for optimality of the ASP, there is no need for sequential adaptive measurements.
\end{lemma}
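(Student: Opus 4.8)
The plan is to show that the optimal ASP is unchanged if Bob is restricted to non-adaptive product measurements. Since any non-adaptive strategy is a special adaptive one, only the inequality ``adaptive $\le$ non-adaptive'' requires proof. Fix the mixed-radix bijection $\{1,\ldots,d\}\cong\prod_k\{1,\ldots,d_k\}$ and write $x_y=(x_y^1,\ldots,x_y^r)$; by Lemma~\ref{lem:identity} no post-processing is needed, so Bob's guess is the raw outcome string $b=(b^1,\ldots,b^r)$ and correctness $b=x_y$ is the conjunction $\bigwedge_k\{b^k=x_y^k\}$.

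First I would record the adaptive success probability along the unique branch on which every outcome equals the truth. With a component-wise product encoding $\rho_{x_1x_2}=\bigotimes_k\rho^k_{x_1^kx_2^k}$ and product measurements, for fixed $y$,
\begin{equation}
\mathbb{P}(b=x_y\,|\,x_1,x_2,y)=\prod_{k=1}^{r}\tr\!\big(M^{k}_{x_y^{1}\cdots x_y^{k-1}}(x_y^k)\,\rho^k_{x_1^kx_2^k}\big),
\end{equation}
where the POVM on subsystem $k$ is conditioned on the earlier (correct) outcomes $x_y^{1}\cdots x_y^{k-1}$.

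The core step is a peeling induction that collapses the adaptive tree from the last subsystem inward. Averaging over inputs and holding the conditioning $x_y^{1}\cdots x_y^{r-1}$ fixed, the sum of the $r$-th factor over the subsystem-$r$ inputs $(x_1^r,x_2^r)$ reduces---after summing out the non-target digit $x_{3-y}^r$ into a reduced operator $\tilde\rho^r_{x_y^r}$---to $\sum_{x_y^r}\tr(M^{r}_{\cdots}(x_y^r)\,\tilde\rho^r_{x_y^r})$. By POVM completeness this is at most the subsystem-$r$ optimum $V_r=\max_{M}\sum_s\tr(M(s)\,\tilde\rho^r_s)$, and---crucially---$V_r$ does not depend on the conditioning, because $\rho^r$ occupies its own tensor factor and the earlier outcomes carry no information about it. Factoring $V_r$ out and iterating over $k=r-1,\ldots,1$ bounds the whole expression by $\prod_k V_k$, which is precisely the ASP achieved by the non-adaptive decoder that applies each subsystem's individually optimal fixed measurement; averaging over $y$ then yields $\bar p_{\text{adaptive}}\le\bar p_{\text{non-adaptive}}$, hence equality.

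The hard part is the conditioning-independence of each $V_k$, and it is exactly where the hypotheses enter: both the encoding and the decoding must factorize across subsystems. The remaining subtlety is to exclude that a \emph{cross-correlated} product encoding (each $\rho^k$ depending on the full input) combined with an adaptive decoder could beat the component-wise value; the resolution is that such adaptivity can only \emph{gauge away} the cross-dependence---relabelling subsystem $k$'s measurement basis by the previously learned digits---so that the optimum over (encoding, adaptive decoding) coincides with the optimum over (component-wise encoding, fixed decoding). Were entangling measurements or an entangled encoding permitted, the per-subsystem scores would no longer decouple and the peeling would fail, which is why the no-coupling assumption stated above is essential.
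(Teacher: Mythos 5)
Your core argument is correct and is, at heart, the same proof as the paper's: both collapse the adaptive measurement tree branch by branch, replacing each outcome-conditioned measurement by the single best fixed branch, and both use the product structure to justify doing this one subsystem at a time. The paper peels front-to-back in the language of conditional probabilities (condition on the value of $B^1$, claim conditional independence, take the maximum over that value, iterate), whereas you peel back-to-front at the operator level with $y$ held fixed. Your choice to fix $y$ throughout is in fact the sounder bookkeeping: the paper's probabilities are averaged over $Y$, and its factorization $\mathbb{P}(B^2=\mathrm{corr.},B^1=\mathrm{corr.}\,|\,B^1=s)=\mathbb{P}(B^1=\mathrm{corr.}\,|\,B^1=s)\,\mathbb{P}(B^2=\mathrm{corr.}\,|\,B^1=s)$ can fail under that averaging, because the two correctness events are correlated through the shared input $Y$ (e.g.\ if each subsystem encodes its digit of $x_1$ perfectly and nothing about $x_2$, both events co-occur exactly when $Y=1$, giving joint probability $5/8$ versus product $9/16$ for two qubits). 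Conditioning on $y$ first, as you do, and letting the chosen fixed measurements depend on $y$ (which Bob knows) removes this defect while reaching the same conclusion.

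The genuine weak point is your closing paragraph. In the paper's own Definition~\ref{def:product} the product constraint is only $\rho_{x_1x_2}=\rho^1_{x_1x_2}\otimes\cdots\otimes\rho^r_{x_1x_2}$: each factor may depend on the \emph{whole} input string, so the ``cross-correlated'' case is squarely inside the lemma's scope, not an optional refinement. Your induction genuinely breaks there: the reduced ensemble $\tilde\rho^r_s$, and hence $V_r$, then depends on the other subsystems' digits, so the conditioning-independence of $V_r$ — the step you yourself identify as the crux — is simply unavailable, and an adaptive measurement that tracks the learned digits is no longer obviously dominated by any fixed one. The statement that adaptivity ``can only gauge away the cross-dependence'' is not an argument; it is a paraphrase of what must be proven, and nothing in your peeling supports it. To be fair, the paper's proof implicitly suffers the same restriction: its justification ``since our qudits are in a product state, their outcomes are independent'' is valid only when conditioning on earlier outcomes does not skew the posterior over inputs in a way correlated with later subsystems' correctness, which is precisely the component-wise property you assumed at the outset (and which the paper then asserts, rather than derives, when it concludes that the $k$-th carrier ``only has information about $x_1^k$ and $x_2^k$''). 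So relative to what the paper actually establishes, your reconstruction is faithful and in places cleaner; but your last paragraph should be labeled an unproven claim, not a resolution.
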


Observe that the above lemmas together imply that the highest ASP for a composite system can be achieved with a strategy that consists of $r$ QRACs \textit{in parallel}, one on each subsystem $\rho^k$, independently. In this case, if we write Alice's inputs as
dit-strings $x_y=x^1_yx^2_y\ldots x^r_y$, the success probability for each round is: $\mathbb{P}(b=x_y|x_1,x_2,y) = \prod_{k = 1}^r \mathbb{P}(b^k=x_y^k|x_1^k,x_2^k,y)$. The optimal $\bar{p}$ is not necessarily given by the independent optimal strategies on the individual subspaces.
Therefore, in order to optimize it we introduce the \textit{trade-off function}
$\mathcal{M}_d(z)$ (see the Supplemental Material \cite{SupMat}, which includes
Ref. \cite{inequalities}), which provides the optimal
probability of guessing dit $x_2$ given a fixed probability of guessing dit
$x_1$. Let $z=\mathbb{P}(\mbox{\text{Bob correctly guesses}} \, x_1)$. Then, $\mathcal{M}_d(z)$ in dimension $d$ is defined by $\mathcal{M}_d(z) = \max \{\mathbb{P}(\mbox{\text{Bob correctly guesses}} \, x_2)|z\}$, where the maximization is limited to all encoding-decoding strategies respecting
the condition of guessing $x_1$ with probability $z$. Thus, in a general case
\begin{equation}
\bar{p}_{Q_{d_1}\ldots C_{d_r}} = \underset{z^1,\ldots,z^r}{\max}\frac{1}{2}[z^1\cdots z^r + \mathcal{M}_{d_1}^q(z^1)\cdots\mathcal{M}_{d_r}^c(z^r)],
\label{eq:asp_tof}
\end{equation}
where we denote $d$-dimensional quantum and classical states by $Q_d$ and $C_d$,
respectively. $\mathcal{M}^q_d$ and $\mathcal{M}^c_d$ are the corresponding
quantum, and classical trade-off functions \cite{SupMat}. Therefore, $\bar{p}$ is a function
of $r$ real variables, and its maximum can be found using standard heuristic numerical search algorithms \cite{Press}. We present the ASP optimal values for some relevant cases of a $d=1024$ dimensional system in Table \ref{table1}. The full list of cases is found in the Supplemental Material \cite{SupMat}. Note that the gaps between the different ASP values
are large enough to be experimentally observed, as we demonstrate next.

\begin{table}
\centering
\begin{tabular}{c@{\hspace{2cm}}c}\hline \hline
Case & Optimal $\bar{p}$ \\ \hline
$Q_{1024}$ & 0.515625 \\
$Q_{512}Q_2$ & 0.500980 \\
$Q_{512}C_2$ & 0.500973 \\
$Q_{32}Q_{32}$ & 0.500521 \\
$(Q_2)^{10}$ & 0.500493 \\
$Q_2C_{512}$ & 0.500489 \\
$C_{1024}$ & 0.500488 \\ \hline \hline
\end{tabular}
\caption{Relevant cases for a 1024-dimensional system and the
respective optimal ASPs (Eq.(\ref{eq:asp_tof})) considering each product structure. The full table can be found in the Supplemental Material \cite{SupMat}.}\label{table1}
\end{table}

{\em Experiment.---} To demonstrate the practicability of our technique we generate a 1024-dimensional photonic state, encoded into the linear transverse momentum of single-photons, and use the 1024-dimensional QRAC GDW to certify that it is an irreducible quantum system. To achieve this, we first show that the ASP (Eq.(\ref{eq:asp_tof})) can be written as a simple function of the detection events. Then, we observe that our recorded statistics violate the second highest ASP bound, $Q_{512}Q_2$, given in Table \ref{table1}. Thus, ensuring that it is an irreducible 1024-dimensional quantum system.

In the 1024-dimensional QRAC GDW, Bob measures the elements of the two 1024-dimensional MUBs given in the Supplemental Material \cite{SupMat}. We denote the MUBs states by $|m_{j}^{y}\ra$, where $y = 1,2$ defines the measuring base $MUB_1$ or base $MUB_2$, and $j=1,...,1024$ denotes the state of a given base. Alice's state is written in terms of the two input dits $x_1$ and $x_2$ as an equal superposition of the states Bob would need to guess $x_y$ correctly:
\begin{equation}
\label{eq:optstate}
  |\Psi_{x_1x_2}\ra = \frac{1}{N}(|m_{x_1}^{1}\ra + \text{sgn}(\la m_{x_1}^{1}|m_{x_2}^{2}\ra)|m_{x_2}^{2}\ra),
\end{equation}
where $N = \sqrt{2(1+\frac{1}{32})}$ is a normalization factor and sgn is the sign function. The optimality of the encoded states \eqref{eq:optstate}, and the use of MUBs is derived in the Supplemental Material \cite{SupMat}.

For the experimental test, we resort to the setup depicted in Fig.~\ref{fig:setup}. At the state preparation block, the single-photon regime is achieved by heavily attenuating optical pulses with well calibrated attenuators. An acousto-optical modulator (AOM) placed at the output of a continuous-wave laser operating at 690nm is used to generate the optical pulses. The average number of photons per pulse is set to $\mu=0.4$. In this case, the probability of having non-null pulses is $P(n\geq1|\mu=0.4)=33\%$. Pulses containing only one photon are the majority of the non-null pulses generated and accounts to 82$\%$ of the experimental runs. Thus, our source is a good approximation to a non-deterministic single-photon source, which is commonly adopted in quantum communications \cite{ReviewQComm}.

\begin{figure}
  \centering
  \includegraphics[width=0.5 \textwidth]{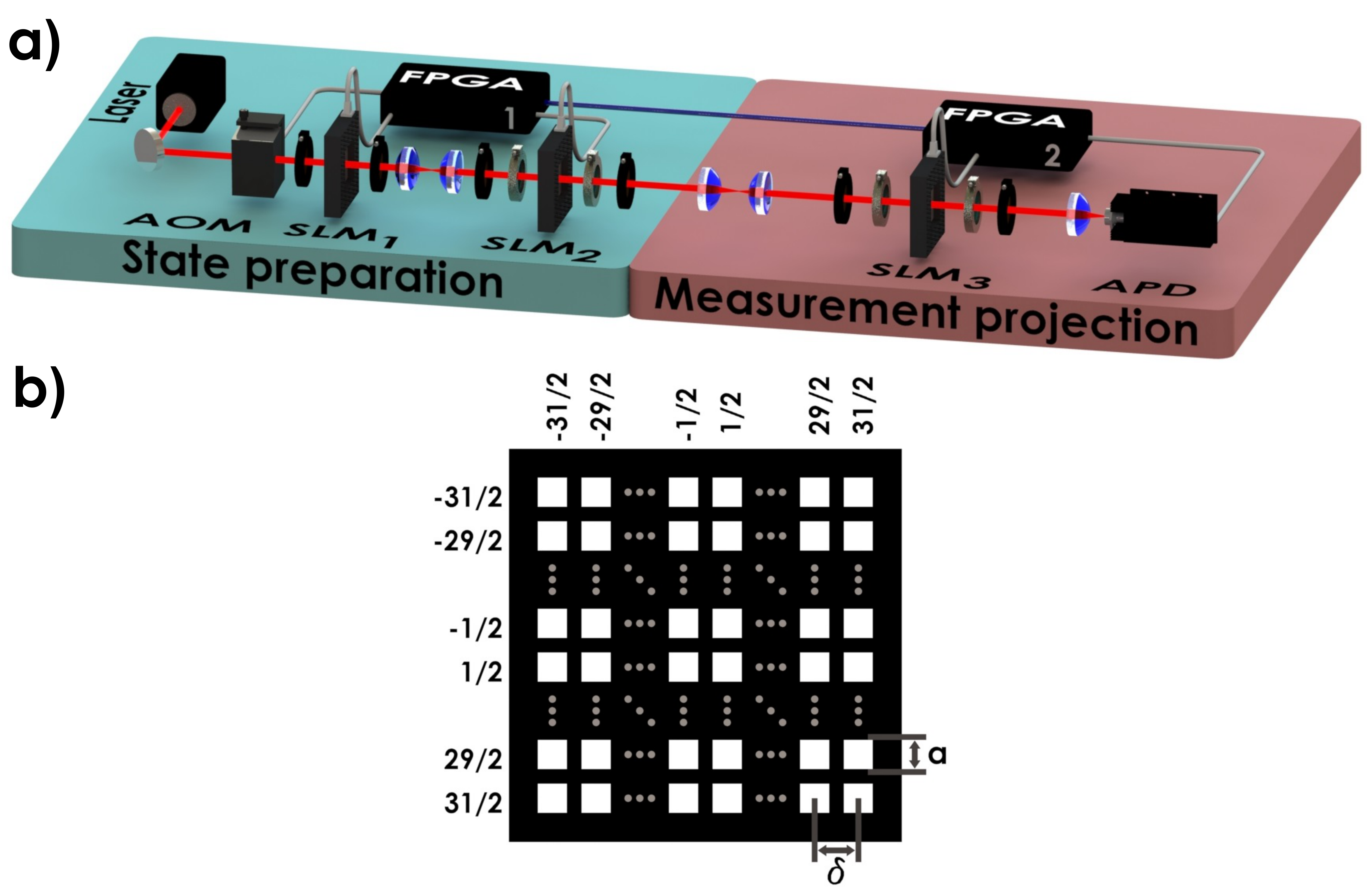}
  \caption{\textbf{a}. Experimental setup. We employ a prepare-and-measure scheme to generate and project spatial qudits, encoded into the linear tranverse momentum of single-photons. At the state preparation block, the spatial encoding is applied through two spatial light modulators (SLMs), and the state projection is likewise performed by a SLM combined with a point-like single-photon detector (APD) at the measurement projection block (see main text for details). \textbf{b}. The 32$\times$32-square mask addressed by the SLMs.}\label{fig:setup}
\end{figure}

The single-photons are then sent through two spatial light modulators, SLM1 and SLM2, addressing an array of 32$\times$32 transmissive squares. The square side is $a=96\mu$m and they are equally separated by $\delta=160\mu$m (see Fig. \ref{fig:setup}b). Thus, effectively creating a 1024-dimensional quantum state defined in terms of the number of modes available for the photon transmission over the SLMs \cite{QKD16,LimaPRL05,Lima10,Lima11,Dardo15,Lima09}. Specifically, the state of the transmitted photon is given by $|\Psi\rangle = \frac{1}{\sqrt{C}}\sum_{l=-l_{N_c}}^{l_{N_c}}\sum_{v=-l_{N_r}}^{l_{N_r}}\sqrt{t_{lv}}e^{-i\phi_{lv}}|c_{lv}\rangle,$ where $|c_{lv}\rangle$ is the logical state representing the photon transmitted by the ($l,v$) square. $t_{lv}$ represents the transmission and $\phi_{lv}$ the phase-shift given by the ($l,v$) square. The transmission of each square is controlled by the SLM1, which is configured for amplitude-only modulation. The phases $\phi_{lv}$ are controlled by SLM2 working on the configuration of phase-only modulation \cite{Lima11}. $N_c$ and $N_r$ represent the number of columns and rows, respectively. For simplicity, we define $l_{N_c} \equiv \frac{N_c-1}{2}$, $l_{N_r}\equiv\frac{N_r-1}{2}$, and $C$ is the normalization factor.

At the measurement block we use a similar scheme to the one used in the state preparation block. It consists of a SLM3, also configured for phase-modulation, and a ``pointlike'' avalanche single-photon detector (APD). As explained in details at \cite{Lima11,QKD16}, by placing the ``pointlike'' APD at the SLM3 far-field (FF) plane, and properly adjusting the ($l,v$) square phase-shifts, Bob can detect any state $|m_{j}^{y}\ra$ required for the 1024-dimensional QRAC session. The ``pointlike'' APD is composed of a pinhole (aperture of $10\mu$m diameter) fixed at the center of the FF plane, followed by the APD module. In this case, the probability of photon detection is proportional to the overlap between the prepared and detected states. For the case of a $d$-dimensional QRACs implemented with a single-detector scheme, we show at the Supplemental Material (see 
\cite{SupMat} and Refs. \cite{DIDW14,QKD16,Pan2016,Fickler2012} therein) that the ASP function can be written as
\begin{equation}
\bar{p} = \frac{D_1}{D_1+D_2}.
\label{eq:pDI}
\end{equation} We first consider the events with $x_y=j$ (again, $j=1,...,1024$ denotes the state of a given base) and define the total number of such events to be $X_1$. Then, we define $D_1$ as the number of "clicks" recorded in the experiment in those cases. Likewise, we denote $X_2$ to be the number of events where $x_y \neq j$ and define $D_2$ to be the clicks in those cases.

\begin{figure}
  \centering
  \includegraphics[width=0.45 \textwidth]{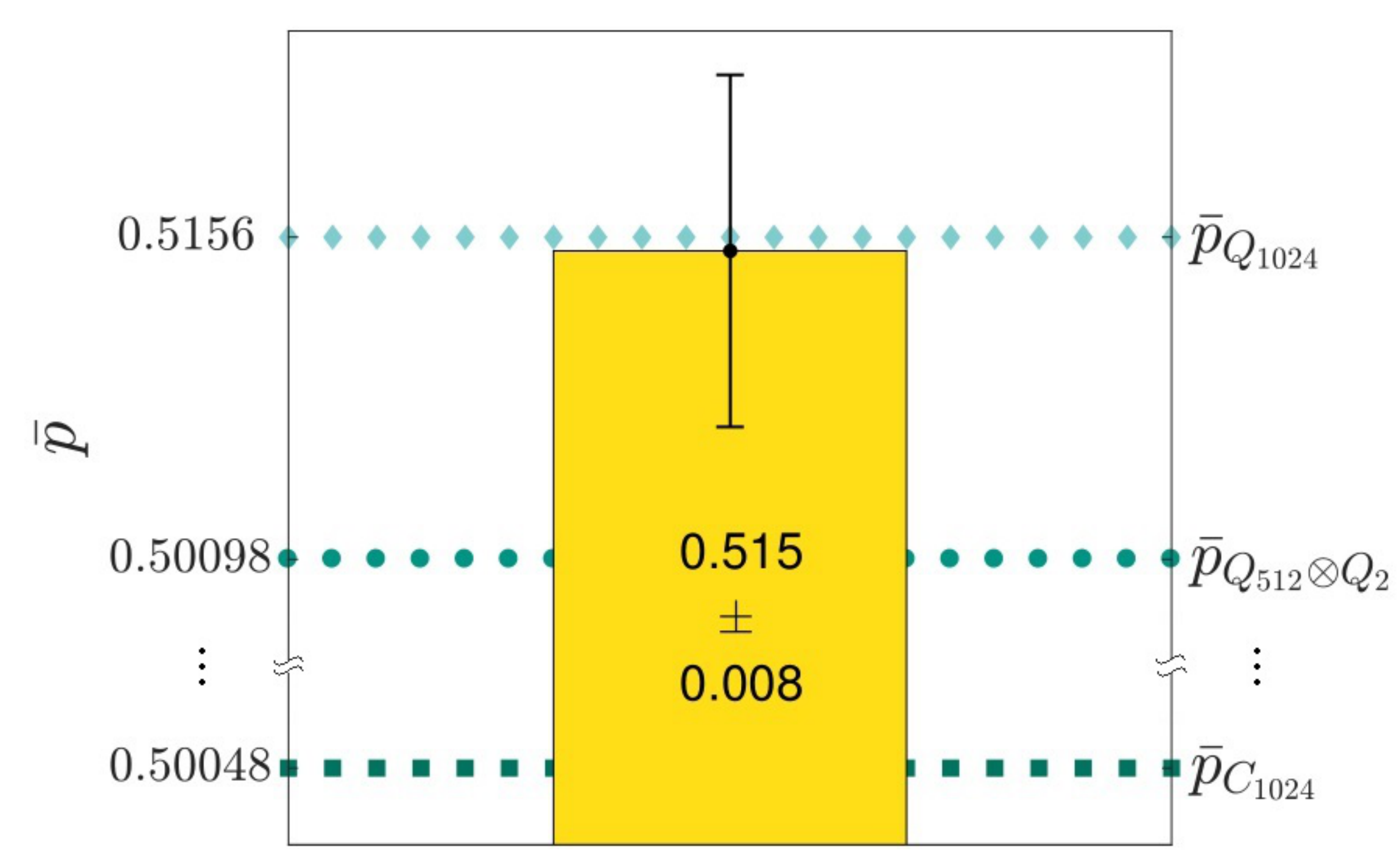}
  \caption{Experimental results. We experimentally observe $\bar{p}=0.515\pm 0.008$, violating the second highest ASP bound $\bar{p}_{Q_{512}\otimes Q_2}$ (see Tab.\ref{table1}). The error bar is calculated assuming Poissonian statistics for a photon detection event.}\label{fig:psucc}
\end{figure}

By means of two field-programmable gate arrays (FPGA) electronic modules we are able to automate and actively control both blocks of the setup. At the state preparation block, since the state $|\Psi\rangle$ needs to be randomly selected from the set of states defined by the 1024-dimensional QRACs, a random number generator (QRNG - Quantis) is connected to FPGA1. FPGA1 controls the optical pulse production rate by the AOM, set at 60 Hz as limited by the refresh rate of the SLMs. Each attenuated optical pulse corresponds to an experimental round. At the measurement block, a second QRNG is connected to FPGA2, providing an independent and random selection for the projection $|m_{j}^{y}\ra$ at each round. FPGA2 also records whether a detection event occurs. The overall detection efficiency is 13\%. The protocol is executed as follows: In each round, FPGA1 reads the dits $x_1$ and $x_2$ produced by its QRNG. Then, FPGA1 calculates the amplitude and phase of each ($l,v$) square of SLM1 and SLM2 to encode the state $|\Psi_{x_1x_2}\ra$ onto the spatial profile of the single-photon in that experimental round. Simultaneously, FPGA2, reads from its QRNG the value of $y$ and $j$. Similar to what is done in the state preparation block, FPGA2 also calculates the phase for each ($l,v$) square in SLM3 to implement the chosen projection $|m_{j}^{y}\ra$. The amplitude and relative phase for each SLM was previously characterized in order to obtain the modulation curves as a function of its grey level. In this experiment, this is necessary to dynamically generate all possible states, as it would be unfeasible to pre-record pre-defined masks for the SLMs on the FPGAs for each one of the $1024^2$ required initial states.

The experiment continuously ran over 316 hours. In this way, the statistics fluctuations observed for $D_1$ and $D_2$ were sufficiently small to unambiguously certify the generation of an irreducible 1024-dimensional quantum system. The overall visibility in our system is $97.00 \pm 0.07\%$ and the corresponding recorded average success probability is $\bar{p}=0.515\pm 0.008$. In Fig.~\ref{fig:psucc} we compare it with the second highest ASP bound shown in table \ref{table1}, associated with a composite system of the type $Q_{512}Q_2$. This certifies, only from the statistics recorded, that the generated state is not encoded using non-coupled different degrees of freedom of a photon, for instance polarization and momentum. Thus, ensuring it to be an irreducible 1024-dimensional quantum system that can provide all the advantages known for high-dimensional quantum information processing, in the sense explained in \cite{Scarani17}.

{\em Conclusion.---} Dimension witnesses are practical protocols on the field of quantum information as they allow one to obtain information regarding unknown quantum states \cite{BPAGMS, GBHA}. They are especially appealing while addressing the generation and characterization of high-dimensional quantum states, where quantum tomography demands at least $d^2$ measurements \cite{wootters}. In general, DWs are functions of only a few measurement outcome probabilities and allow for assessments on the dimension required to describe a given quantum state in a device-independent way \cite{DIDW14,BPAGMS,GBHA,Ahrens2012,Hendrych2012,Brunner2013,Bowles2014}. Here we give a step further by introducing a new class of DW, which certifies the dimension of the system, and has the new distinct feature of allowing the identification whether a high-dimensional system is irreducible. The application of this new feature is of broad relevance for several new architectures aiming for high-dimensional quantum information processing \cite{DIDW14,QKD16,Gao2010,Dada2011,Kwiat2005,Fickler2012,Krenn2014,Yao2012,Blatt2011,Pan2016,Barends2014,Malik2016,Fickler2016}, and the understanding of macroscopic quantumness \cite{Gisin18}. We demonstrate the practicability of our technique by using it to certify the generation of an irreducible 1024-dimensional photonic quantum state encoded into the linear transverse momentum of single-photons transmitted by programable diffractive apertures, which have been used for several high-dimensional quantum information processing tasks \cite{QKD16,MulticoreFibers,Marques15,KS14,MEQSD17}.

\begin{acknowledgments}
E.A.A. and M.F. thank Micha\l~Oszmaniec for fruitful discussions.
This work was supported by First~TEAM/2016-1/5, Sonata~UMO-2014/14/E/ST2/00020, Fondecyt~1160400, Fondecyt~11150324, Fondecyt~1150101 and Millennium Institute for Research in Optics, MIRO. E.A.A. acknowledges support from CONACyT. J.F.B. acknowledges support from Fondecyt~3170307. J.C. acknowledges support from Fondecyt~3170596. D.M. and M.A. \ acknowledge support from CONICYT.
\end{acknowledgments}


\onecolumngrid

\begin{center}
\LARGE{Supplemental Material}
\end{center}

The supplemental material is organized into two sections:  Theory (\ref{sec:theory}), and Experimental Considerations (\ref{sec:Exps}). 

The theoretical section makes all the formal definitions and provides the proofs of Theorem 1, Lemma 1, and Lemma 2 of the main text. We further clarify Equation (1) of the main text, as well as showing the explicit form of the trade-off functions. The theoretical section ends with two examples. In particular we calculate a table of all of the possible quantum partitions for $d=1024$ as direct proof that indeed: $Q_{1024} > Q_{512}Q_2 >$ ``all other partitions". (Table \ref{table1024})

The experimental section explicitly show the representation of the MUBs that were used in the experiment. We also formalize the single-detector scheme, and explain how the figure of merit (Equation (3) of the main text) is derived. Finally, we show how this figure of merit depends on the overall detection efficiency $\nu$ and average photon number per pulse $\mu$.

\section{Theory}\label{sec:theory}
\subsection{Formal Definitions and Problem Formulation}

We begin by defining $n^d\rw 1$ \textit{Random Access Codes} (RACs) rigorously. RACs is a strategy in which Alice tries to compress an $n$-dit string into $1$ dit, such that Bob can recover any of the $n$ dits with high probability \cite{ALMO}. Specifically, Alice receives an input string $x=x_1x_2\ldots x_{n}$ drawn from a uniform distribution, where $x_i\in [d]$, with $[d]=\{1,2,\ldots,d\}$. Note that in the special case of the main manuscript, we always use $x=x_1x_2$. She then uses an encoding function $\ee:[d]^n\rw [d]$, and is allowed to send one dit $a_x=\ee(x)$ to Bob. On the other side, Bob receives an input $y\in [n]$ (also uniformly distributed), and together with Alice's message $a_x$ uses one of $n$ decoding functions $\dd^y:[d]\rightarrow [d]$, to output $b=\dd^y(a_x)$ as a guess for $x_y$. If Bob's guess is correct (i.e. $b=x_y$) then we say that they \textit{win}, otherwise we say that they \textit{lose}. We can then quantify their probability of success $\PP(\dd^y(\ee(x))=x_y)$, which in general depends on their inputs and on the chosen \textit{strategy} $(\ee,\dd)$, where $\dd=\{\dd^y\}_{y=1}^n$.

Similarly, one defines the d-dimensional $n^d\rw 1$ Quantum Random Access Codes (QRACs) with the only change being that Alice tries to compress her input string into a $d$-dimensional quantum system (see Fig.\ref{fig:RAC}). Alice encodes her $n$-dit string via $\ee:[d]^n\rw \mathcal{S}(\CC^d)$, and sends the $d$-dimensional system $\rho_x = \ee(x)$ to Bob. He then performs some decoding to output his guess $b\in[d]$ for $x_y$. The decoding function is a quantum measurement followed by classical post-processing, as we clarify next.

\begin{figure}[h]
\begin{center}
\includegraphics[width=0.5\textwidth]{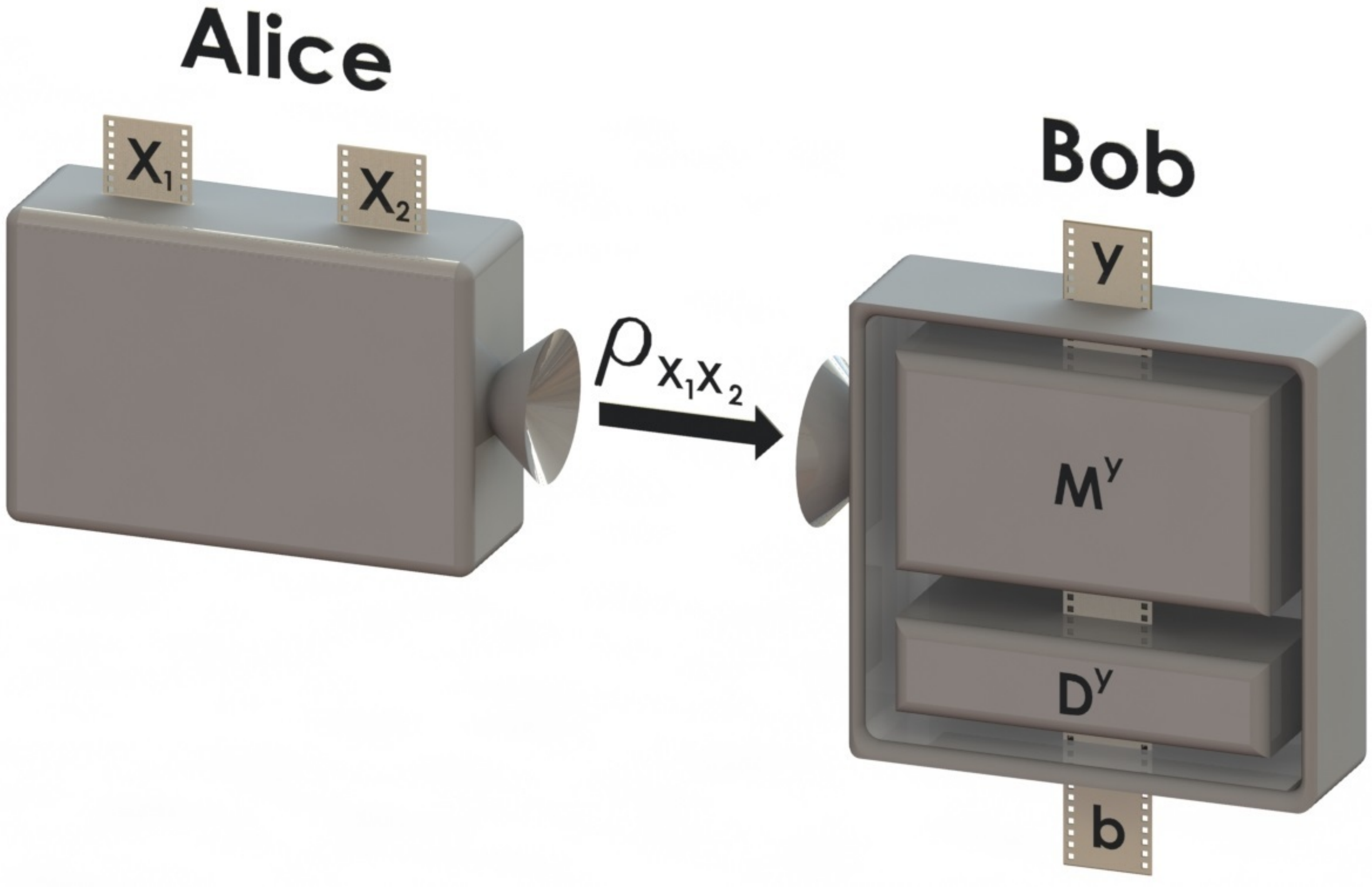}
\end{center}
\caption[]{d-dimensional $2^d\rw 1$ QRACs scenario, which is the one considered in the main manuscript. Alice receives the input dits $x_1$ and $x_2\in\{1,\ldots,d\}$, and prepares the state $\rho_{x_1x_2}$ which is sent to Bob. He receives the input $y\in\{1,2\}$, which defines the quantum measurement $M^y$ and the classical post-processing function $\dd^y$ to be applied to $\rho_{x_1x_2}$. As a result, Bob outputs $b$.}\label{fig:RAC}
\end{figure}

\begin{definition} {$~$}\nl
A \textbf{quantum decoding strategy} is $\dd=\big\{\{M_l^y\}_l,\dd^y\big\}_{y=1}^n$, i.e. $n$ pairs of measurement operators $\{M_l^y\}_l$ (normalized $\sum_l M_l^y = \ONE~ \forall y$ , and positive semi-definite $ M_l^y \geq 0 \SP \forall l,y$), and classical post-processing functions $\dd^y:[d]\rw[d]$, such that if Bob receives as input $\rho_{x}$ and $y$, he outputs $b=\dd^y(l)$ with probability $\tr[\rho_{x} M_l^y]$.
\end{definition}

To quantify the performance of a given encoding-decoding strategy, we shall employ the \textit{average success probability} (ASP) $\bar{p}$ as our figure of merit.

\begin{definition}{$~$}\nl \label{def:asp}
The \textbf{Average Success Probability} of a given encoding-decoding strategy $(\ee,\dd)$ is:
\begin{equation}\label{eq:asp}
\bar{p} = \frac{1}{nd^n} \sum_{x,y} \PP(B=x_y|X=x, Y=y),
\end{equation}
\end{definition} where uppercase letters $X,Y,B$ denote random variables, while the corresponding lowercase letters represent the events (i.e. the values the random variables can take). Another useful way of understanding the ASP is by viewing the whole QRAC protocol as a game and thinking of the ASP as the probability that Alice and Bob win any given round. Loosely speaking:
\begin{equation}\label{eq:altasp}
\bar{p} = \PP(B=\text{correct}).
\end{equation} Nonetheless, the real object of interest is the \textit{optimal average success probability}, which corresponds to the maximal value of $\bar{p}$ taken over all possible encoding-decoding strategies. Explicitly:
\begin{equation}
\bar{p}_{(C,Q)_d} = \max_{\{\ee,\dd\}} \frac{1}{nd^n} \sum_{x,y} \PP(B=x_y|X=x, Y=y),
\end{equation} with $C$ and $Q$ respectively representing the classical and quantum scenarios.

\begin{definition}{$~$}\nl \label{def:product}
For a fixed $d$, we define a \textit{product structure} by the set $\big\{r,
\{d_k\},\{\alpha_k\}\big\}$. For a composite system, $d=\prod_{k=1}^r d_k$, where $d_k$ is the dimension of each subsystem and $r$ is the number of subsystems. The state of the composite system can be written as $\rho=\rho_{\alpha_1}^1\otimes\rho_
{\alpha_2}^2\otimes\cdots\otimes\rho_{\alpha_r}^r$. Here, $\alpha_k=
\text{c}$ and $\alpha_k=\text{q}$, are used to denote the ``classical'' and ``quantum'' nature of the subsystem, respectively. Then, $\rho_c^k\in\Delta_{d_k-1}$ is a classical state, and $\rho_q^k\in\mathcal{S}
(\mathbb{C}^{d_k})$ is a quantum state.
\end{definition}

We are now in a position to formally pose the central question of this paper. Suppose Alice creates states of dimension $d$ with a certain \textit{product structure}, i.e. she creates the state $\rho=\rho_{\alpha_1}^1\otimes\rho_{\alpha_2}^2\otimes\cdots\otimes\rho_{\alpha_r}^r$. When dealing with separable states, it is easier to speak as if the information was encoded into distinct non-interacting physical systems. Of course it could equivalently be the case that there is only one physical system with non-interacting degrees of freedom creating the abstract separable structure, but for the sake of clarity we will keep the first picture in mind. This may be viewed as adding constraints to Alice's possible encoding functions $\ee$. 

We must further assume the same constraints on Bob's measurements. This might seem arbitrary, as we are only interested in the nature of the prepared state. Nevertheless, one can argue that if e.g. Bob is allowed to perform ``entangling" measurements, this device might as well be located in Alice's lab, allowing her to prepare an arbitrary entangled state which does not respect the original constraints. That is, we are interested in the scenario where both Alice and Bob have the same technological equipment at their disposal, as is the case in experiments \footnote{We also note that numerical evidence on small dimensional examples suggests that if the quantum states are in a product form, then entangling measurements do not improve the ASP.}. We remark that this assumption was also used to prove robustness in \cite{Scarani17}. Table \ref{tab:cases} gives an example of different product structures if $r\leq 2$.

\begin{table}[ht!]
\begin{center}
\begin{tabular}{ | c | c |  }
\hline
Case & Constraints on $\ee$, and $\dd $  \\ \hline
\multirow{2}{*}{$Q_{d_1d_2}$}
& Fully Quantum (No Constraints) \\
		& $\rho \in \mathcal{S}(\CC^{d}) $ \\ \hline
		\multirow{3}{*}{$Q_{d_1}Q_{d_2}$}
		& Separable Quantum States \\
			& $\rho = \rho_q^1 \otimes \rho_q^2 $ \\
			& $\rho_q^1 \in \mathcal{S}(\CC^{d_1})$ , $\rho_q^2 \in \mathcal{S}(\CC^{d_2})$ \\ \hline
			\multirow{3}{*}{$Q_{d_1}C_{d_2}$}
			& Classical Quantum \\
				& $\rho = \rho_q^1 \otimes \rho_c^2 $ \\
				&  $\rho_q^1 \in \mathcal{S}(\CC^{d_1})$, $\rho_c^2 \in \Delta_{d_2-1}$ \\ \hline
				\multirow{3}{*}{$C_{d_1}Q_{d_2}$}
				& Classical Quantum \\
					& $\rho = \rho_c^1 \otimes \rho_q^2$ \\
					&  $\rho_c^1 \in \Delta_{d_1-1}$, $\rho_q^2 \in \mathcal{S}(\CC^{d_2})$ \\ \hline
					\multirow{2}{*}{$C_{d_1d_2}$}
					& Classical \\
						& $\rho  \in \Delta_{d_1d_2-1} $ \\ \hline
\end{tabular}
\caption{Example of Alice's possible product structures, if the dimension $d=d_1d_2$ factorizes and $r\leq 2$. We assume that the measurement $\dd$ has the same product structure as the encoding $\ee$.}
\label{tab:cases}
\end{center}
\end{table}

Our main theorem states that the optimal ASPs of QRACs serve as a tool to differentiate these product structures. For convenience we also restate it here.

\setcounter{theorem}{0}
\begin{theorem}[Main theorem]
\label{thm:main}
d-dimensional $2^d \to 1$ QRACs serve as gamut dimension witnesses using the ASP function.
\end{theorem}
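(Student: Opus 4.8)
The plan is to show that the optimal ASP values $\bar p$, regarded as a function of the product structure $\{r,\{d_k\},\{\alpha_k\}\}$ with $\prod_{k=1}^r d_k=d$, are pairwise distinct; since $\bar p$ is manifestly a linear functional of the outcome probabilities $\PP(b=x_y\,|\,x_1,x_2,y)$, this distinctness is exactly the defining property of a gamut dimension witness. First I would reduce the optimization to a tractable form. Invoking the working assumption that Bob's measurement respects Alice's product structure, together with Lemma~\ref{lem:identity} (no classical post-processing is needed) and Lemma~\ref{lem:nonadaptive} (no sequential adaptive measurement is needed), the optimal decoding collapses to $r$ independent QRACs run \emph{in parallel}, one on each subsystem $\rho^k$. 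The single-round success probability then factorizes as $\prod_{k=1}^r\PP(b^k=x_y^k\,|\,x_1^k,x_2^k,y)$, and averaging over uniform inputs yields precisely Eq.~\eqref{eq:asp_tof}: a maximization over the per-subsystem ``guess-$x_1$'' probabilities $z^k$ of $\frac12\big[\prod_k z^k+\prod_k \mathcal{M}_{d_k}^{\alpha_k}(z^k)\big]$.

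Next I would pin down the trade-off functions $\mathcal{M}^q_{d}$ and $\mathcal{M}^c_{d}$ explicitly, since the whole separation rests on their shape. For the classical case the achievable region is convex (shared randomness, i.e.\ mixing deterministic strategies), so $\mathcal{M}^c_{d}$ is concave; the extreme strategies ``send $x_1$'' and ``send $x_2$'' pin the boundary through the endpoints $(1,1/d)$ and $(1/d,1)$, which together with concavity determines the curve. For the quantum case I would use that rank-$1$ projective measurements with MUB encoding are optimal (Ref.~\cite{Farkas18}), parametrize Alice's states as in Eq.~\eqref{eq:optstate}, and read off $\mathcal{M}^q_{d}(z)$. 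The crucial qualitative output is $\mathcal{M}^q_{d}(z)\ge \mathcal{M}^c_{d}(z)$, strict on the relevant interval, which encodes the quantum advantage already at the level of the trade-off.

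With these functions in hand I would substitute them into Eq.~\eqref{eq:asp_tof} and perform the (per-subsystem one-dimensional) optimization for each structure. Two monotonicity facts organize the comparison, and I would establish them first: (i) replacing any quantum factor $Q_{d_k}$ by a classical factor $C_{d_k}$ of the same dimension strictly lowers $\bar p$, and (ii) merging two factors into a single quantum factor of the product dimension strictly raises $\bar p$, so that $Q_d$ is the unique maximum $\frac12(1+1/\sqrt d)$ and $C_d$ the unique minimum $\frac12(1+1/d)$. These two ``moves'' endow the set of product structures with a partial order on which $\bar p$ is strictly monotone.

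The hard part is that this partial order is not total: incomparable structures such as $Q_{512}C_2$ versus $Q_{256}Q_4$ are not separated by moves (i)--(ii) alone and must be compared quantitatively, which requires the transcendental optima of Eq.~\eqref{eq:asp_tof} (involving the $1/\sqrt{d_k}$ scalings together with the maximizers of the trade-off products). For the experimentally relevant case $d=1024$ I would finish by direct enumeration: list every factorization of $1024$ with every classical/quantum labelling, evaluate the optimum of Eq.~\eqref{eq:asp_tof} to sufficient numerical precision in each case, and verify that all resulting $\bar p$ are pairwise distinct (Table~\ref{table1024}), confirming $Q_{1024}>Q_{512}Q_2>$ ``all other partitions''. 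This case-by-case verification, rather than a closed-form separation valid for every $d$, is what I expect to be the genuine obstacle, and it is the step I would rely on the numerics to complete.
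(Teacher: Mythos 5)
Your proposal is correct and follows essentially the same route as the paper: reduction to parallel QRACs via the two lemmas (no post-processing, no adaptive measurements), explicit quantum and classical trade-off functions with $\mathcal{M}^q_{d}>\mathcal{M}^c_{d}$ ruling out classical factors, and a final numerical enumeration of all factorizations of $d=1024$ (the paper's Table of quantum partitions) to separate the incomparable cases. The only cosmetic difference is your ``partial order of moves'' framing, which the paper leaves implicit; the substantive content, including the reliance on numerics for the quantitative separations such as $Q_{512}C_2$ versus $Q_{256}Q_4$, is identical.
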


The rest of this section is dedicated to proving Theorem \ref{thm:main}. 

\subsection{Proofs of Lemmas 1 \& 2}

We will show how to transform from the most general setup from Fig. \ref{fig:simplifiedRAC_a}, into the setup of Fig. \ref{fig:simplifiedRAC_b}. In order to do this, we restrict the encoding function to only pure states (the optimality of which is demonstrated in Ref.\cite{ALMO}), the measurements to be projectives (shown optimal for our case in \cite{Farkas18}), and prove two lemmas that show that both (1) classical post-processing functions, and (2) sequential adaptive strategies, are all unnecessary on Bob's side. Note that these lemmas apply in the general $n^d\to1$ case.

\begin{figure}[!ht]
\centering
\begin{subfigure}{0.48\linewidth}
  \centering
\includegraphics[width=0.9\linewidth]{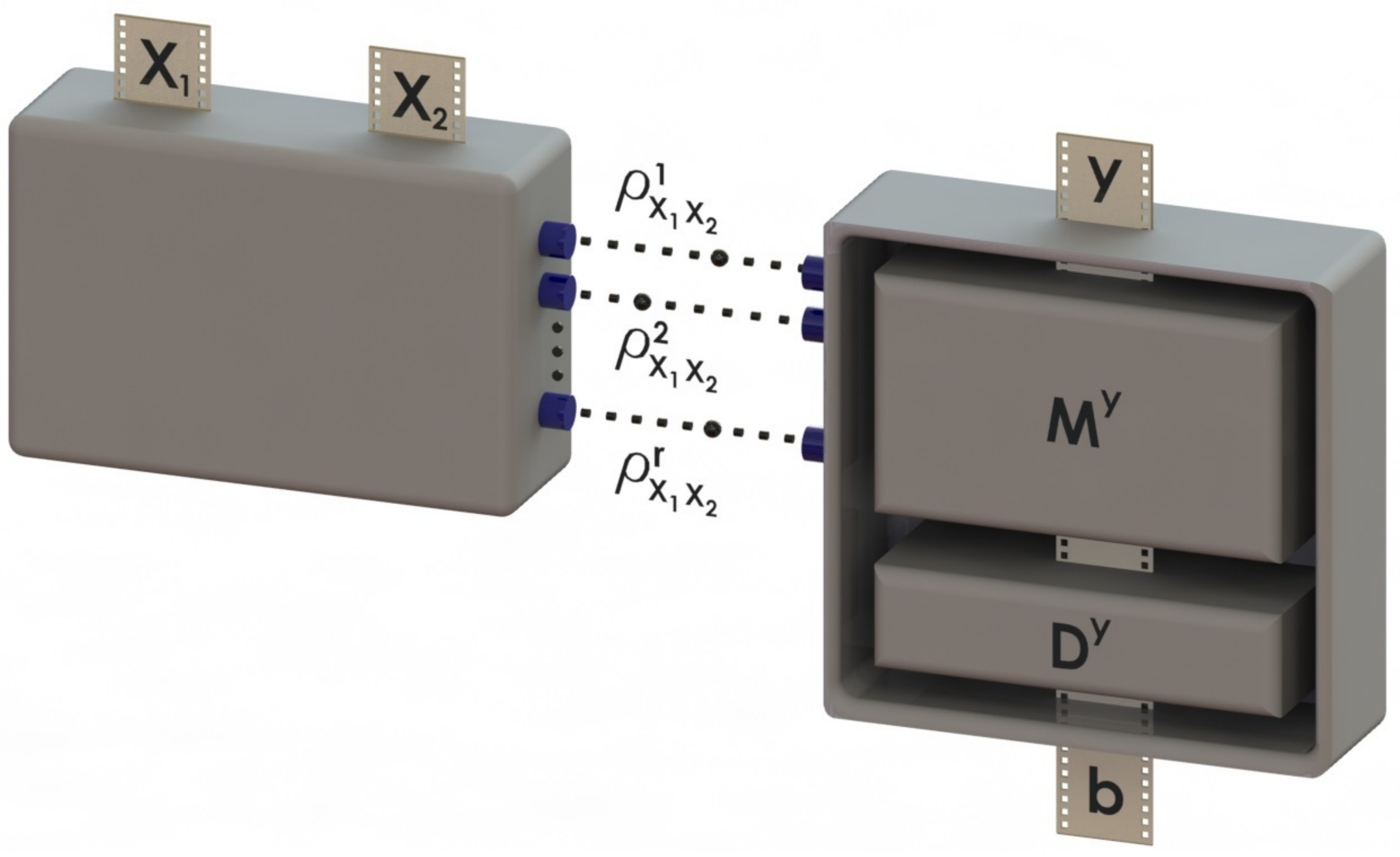}
\caption{}\label{fig:simplifiedRAC_a}
\end{subfigure}
\begin{subfigure}{0.48\linewidth}
  \centering
\includegraphics[width=0.9\linewidth]{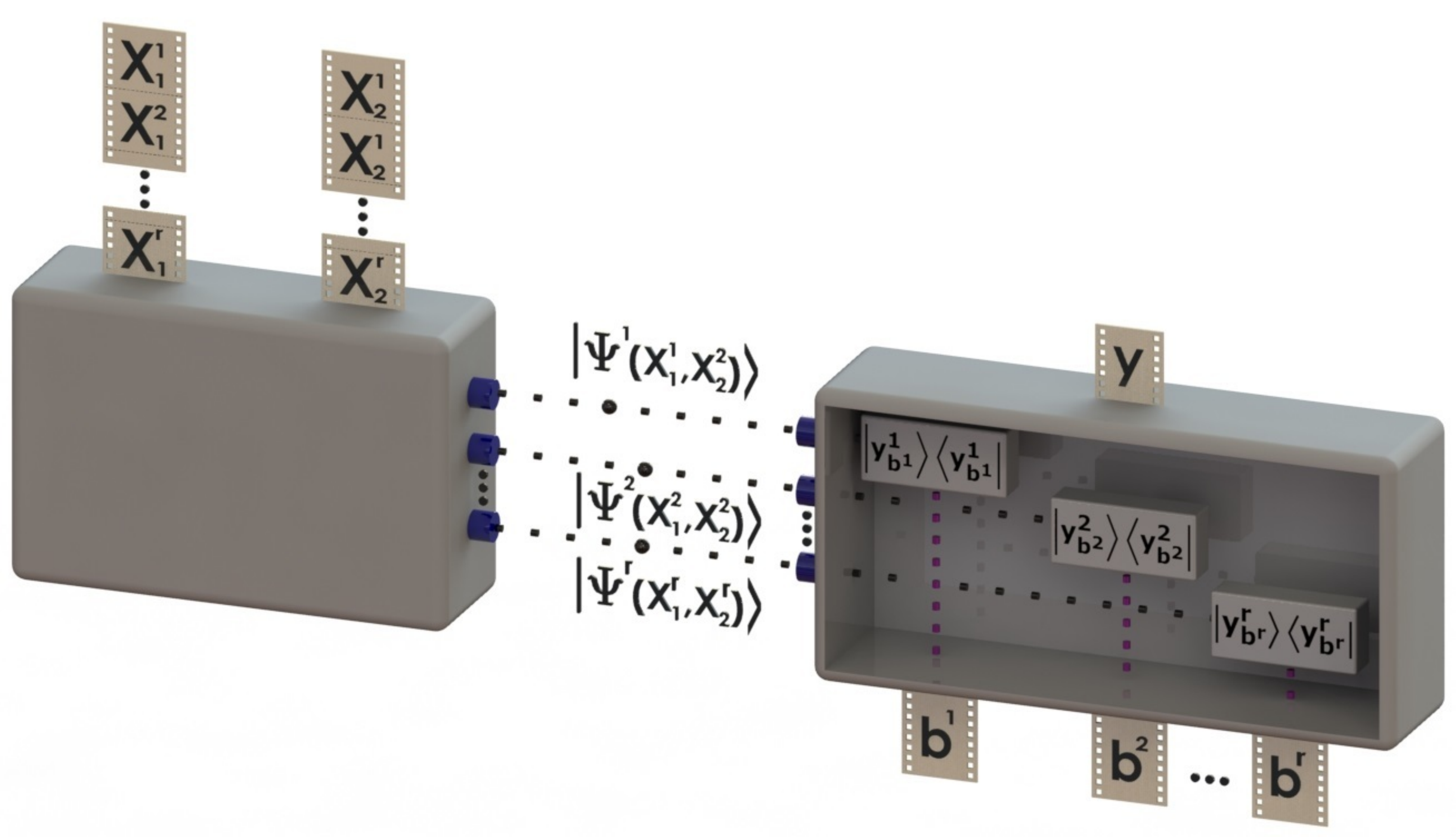}
\caption{}\label{fig:simplifiedRAC_b}
\end{subfigure}
\caption{(a) A generic QRAC with a product structure. (b) A simplified version using Lemmas \ref{lemma:id},\ref{lemma:NonAdapt}.}
\end{figure}

The first simplification we make is to show that the optimal quantum strategy does not require classical post-processing functions $\dd^y$. That is, Bob's output $b$ can simply be read out from his quantum measurements. This is typically assumed in all QRAC papers (e.g. \cite{ALMO,THMB}) but without proof.

\begin{lemma}{$~$}\label{lemma:id}\nl
Given a quantum decoding strategy $(\{M_l^y\}_l,\dd^y)$ with average success probability $\bar{p}$, there exists another quantum decoding strategy $(\{\tilde{M}_l^y\}_l,\tilde{\dd}_y)$ with average success probability $\tilde{ p }\geq \bar{p}$ and with trivial classical post processing $\tilde{\dd}_y = \text{id}$.
\end{lemma}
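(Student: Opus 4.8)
The plan is to absorb the classical post-processing directly into Bob's measurement by coarse-graining the original POVM. Given the strategy $(\{M_l^y\}_l,\dd^y)$, I would define, for each input $y\in[n]$ and each target output $b\in[d]$, a new operator
\[
\tilde{M}_b^y = \sum_{l\,:\,\dd^y(l)=b} M_l^y,
\]
that is, I collect together all measurement outcomes that the old post-processing function would have mapped to $b$. The new decoding strategy is then $(\{\tilde{M}_b^y\}_b,\tilde{\dd}_y)$ with $\tilde{\dd}_y=\text{id}$, so Bob simply reports his measurement outcome as his guess for $x_y$.

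First I would verify that $\{\tilde{M}_b^y\}_b$ is a legitimate POVM for each $y$. Positivity is immediate, since each $\tilde{M}_b^y$ is a sum of positive semi-definite operators and is therefore positive semi-definite. Normalization follows from
\[
\sum_{b\in[d]}\tilde{M}_b^y = \sum_{b\in[d]}\,\sum_{l\,:\,\dd^y(l)=b} M_l^y = \sum_{l\in[d]} M_l^y = \ONE,
\]
because the preimages $\{l:\dd^y(l)=b\}$ are disjoint for distinct $b$ and their union over $b\in[d]$ exhausts the outcome set $[d]$; thus they partition the original outcomes. Any output $b$ not in the image of $\dd^y$ simply receives the zero operator, which is a harmless POVM element.

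Next I would show that the ASP is preserved exactly. For any $x$ and $y$, the probability that the new strategy outputs the correct value is
\[
\PP(\tilde{B}=x_y\,|\,X=x,Y=y) = \tr\!\big[\rho_x\,\tilde{M}_{x_y}^y\big] = \sum_{l\,:\,\dd^y(l)=x_y}\tr\!\big[\rho_x M_l^y\big],
\]
and the right-hand side is precisely $\PP(B=x_y\,|\,X=x,Y=y)$ for the original strategy, since there the event $b=x_y$ occurs exactly when the measurement yields an outcome $l$ with $\dd^y(l)=x_y$. Summing over all $x,y$ and dividing by $nd^n$ gives $\tilde{p}=\bar{p}$, which in particular satisfies $\tilde{p}\geq\bar{p}$ as claimed.

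There is no real obstacle here: the content is simply the observation that grouping POVM elements according to the post-processing map turns classical relabelling into a coarser quantum measurement without altering any outcome statistics. The only points that must be stated carefully are that the preimages genuinely partition the outcome set, so that normalization survives the regrouping, and that empty preimages contribute zero operators, so that $\{\tilde{M}_b^y\}_b$ remains a well-defined $d$-outcome POVM indexed by $[d]$.
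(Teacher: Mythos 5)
Your proposal is correct and follows essentially the same route as the paper: both define the coarse-grained operators $\tilde{M}_b^y = \sum_{l:\dd^y(l)=b} M_l^y$ and observe that the outcome statistics, and hence the ASP, are unchanged. If anything, your write-up is slightly more careful than the paper's, since you explicitly verify positivity, normalization via the partition of the outcome set, and the treatment of outputs with empty preimage.
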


\begin{proof}
Let $\rho_{x}=\ee(x)$ be the states which achieve the optimal average success probability $\bar{p}$. Then Eq \eqref{eq:asp} can be expressed as:
\begin{equation}
\bar{p} = \frac{1}{nd^n} \sum_{x,y} \tr \left[ \rho_{x} \sum_{k:D_y(k)=x_y} M^y_k\right].
\end{equation}
Now, let us define new operators:
\begin{equation}
\tilde{M}^y_k = \sum_{j:D_y(j)=k } M^y_j.
\end{equation}
We can now use the same encoding states $\rho_{x_1,x_2,...,x_n}$ and write the original average success probability in terms of the new operators:
\begin{equation}\label{eq:idasp}
\bar{p} = \frac{1}{nd^n} \sum_{x,y} \tr \left[ \rho_{x}  \tilde{M}^y_{x_y} \right].
\end{equation}
Since we used a fixed encoding strategy and have a new decoding strategy, in principle we could have $\tilde{p} \geq \bar{p}$ after further optimization. Also, we see in Eq \eqref{eq:idasp} that there is no need for explicit classical post-processing (i.e. $\tilde{D}_y(k)=k$). Thus, hereafter, quantum decoding strategies will simply be written as $\{M_b^y\}_b$, since they will directly output the guess $b$.
\end{proof}

Therefore, the most general allowed measurement strategy is:

\begin{definition}{$~$}\label{def:adaptive}\nl
Assume that Bob receives $r$ states from Alice: $\rho = \rho_{\alpha_1}^1 \otimes \rho_{\alpha_2}^2 \otimes \dots \otimes \rho_{\alpha_r}^r$ (in fact, by \cite{ALMO} these could be assumed to be pure states), where each $\rho_{\alpha_i}^i \in \mathcal{S}(\CC^{d_i})$ and $d=d_1d_2 \cdots d_r$. By Lemma \ref{lemma:id}, let the measurement outcome of $\rho_{\alpha_i}^i$ be $b^i\in\{1,2,\ldots,d_i$\}. We call a \textbf{sequential adaptive strategy} any scheme where Bob uses previous measurement outputs to determine the measurement basis of future states. That is, when measuring the state $\rho_{\alpha_j}^j$, the basis $\{M_l^{y,b^1,b^2,\dots,b^{j-1}}\}_{l=1}^{d_j}$ could depend on the previously measured systems.
\end{definition}

This scenario is problematic, since optimizing sequential adaptive quantum strategies turns out to be extremely complicated in general. One of our main technical contributions is to show that they are not necessary for optimality.

\begin{lemma}{$~$}\label{lemma:NonAdapt}\nl
There exists an optimal strategy that does not use sequential adaptive measurements.
\end{lemma}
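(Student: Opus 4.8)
The plan is to show that any sequential adaptive strategy can be matched by a parallel (non-adaptive) one acting independently on each subsystem, so that optimizing over adaptive strategies yields nothing beyond Eq.~\eqref{eq:asp_tof}. Throughout I would invoke Lemma~\ref{lemma:id} to drop the classical post-processing, and use the standing product-structure assumption, so that Alice's encoding factorizes as $\rho_{x_1x_2}=\rho^1_{x_1^1x_2^1}\otimes\cdots\otimes\rho^r_{x_1^rx_2^r}$ (subsystem $k$ seeing only the $k$-th dits $x_1^k,x_2^k$) and, by Definition~\ref{def:adaptive}, Bob measures subsystem $k$ with a projective POVM $\{M^{k,b^1\cdots b^{k-1},y}_{b^k}\}$ that may depend on the earlier outcomes and on $y$.

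First I would fix the input $y$ and write the single-direction success probability $p^{(y)}=\frac{1}{d^2}\sum_{x_1,x_2}\prod_{k=1}^r \tr[\rho^k_{x_1^kx_2^k}M^{k,x_y^{<k},y}_{x_y^k}]$, where on the ``all correct'' path the conditioning outcomes equal $x_y^{<k}:=x_y^1\cdots x_y^{k-1}$, and the total ASP is $\bar p=\frac12(p^{(1)}+p^{(2)})$. The structural fact I would lean on is that, because the subsystems are in a product and the input dits are independent and uniform, the measurement applied to subsystem $k$ is steered only by dits that are statistically independent of what subsystem $k$ actually encodes; adapting on them is intuitively useless.

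To make this rigorous I would argue by induction on $r$, peeling off the last-measured subsystem. Summing the $x^r$-dits in $p^{(y)}$ first collapses the $k=r$ factor to $d_r^2\,z^r_y(\pi)$, where $z^r_y(\pi)=\frac{1}{d_r^2}\sum_{x_1^r,x_2^r}\tr[\rho^r_{x_1^rx_2^r}M^{r,\pi,y}_{x_y^r}]$ is the subsystem-$r$ $y$-success under the prefix $\pi=x_y^{<r}$. Collecting the remaining factors into nonnegative weights $w^{(y)}_\pi\propto\sum_{x^{<r}:\,x_y^{<r}=\pi}\prod_{k<r}\tr[\rho^k_{x_1^kx_2^k}M^{k,x_y^{<k},y}_{x_y^k}]$ (which sum to one) yields $p^{(y)}=p^{(y),<r}\cdot\sum_\pi w^{(y)}_\pi z^r_y(\pi)$, with $p^{(y),<r}$ the $y$-success of the adaptive strategy restricted to the first $r-1$ subsystems. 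Since a convex combination of POVMs is a POVM and $z^r_y$ is linear in the POVM, the averaged measurement $\bar M^{r,y}_b=\sum_\pi w^{(y)}_\pi M^{r,\pi,y}_b$ is a legitimate \emph{non-adaptive} measurement realizing $\sum_\pi w^{(y)}_\pi z^r_y(\pi)$. This de-adaptifies subsystem $r$ while preserving each $p^{(y)}$ separately, and applying the induction hypothesis to the first $r-1$ subsystems closes the step.

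The main obstacle I anticipate is precisely that $p^{(y)}$ is a \emph{product} over subsystems while $\bar p$ is a \emph{sum} over $y$, so one cannot simply average the adaptive measurement factor by factor in isolation. The fix is the weighted averaging above, combined with strengthening the induction hypothesis to reproduce \emph{each} $p^{(y)}$ individually rather than only $\bar p$ (the weights $w^{(1)}_\pi$ and $w^{(2)}_\pi$ generally differ, which is harmless because a non-adaptive strategy is allowed $y$-dependent measurements). Once established, the optimal adaptive ASP coincides with the optimal parallel one; optimizing each subsystem's $y=2$ performance against its $y=1$ performance through the trade-off function $\mathcal{M}_{d_k}$ then gives $\bar p=\max_{z^1,\dots,z^r}\frac12[z^1\cdots z^r+\mathcal{M}_{d_1}(z^1)\cdots\mathcal{M}_{d_r}(z^r)]$, which is Eq.~\eqref{eq:asp_tof}.
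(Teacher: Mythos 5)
Your proposal is correct, and it reaches the lemma by a mechanism that genuinely differs from the paper's in the key de-adaptification step. Both arguments share the same skeleton---peel off one subsystem at a time, decompose the success probability over the possible outcome prefixes, and use the product structure to decouple the last subsystem from the prefix---but the paper eliminates adaptiveness by \emph{maximization}: it writes $\bar p$ via the chain rule of conditional probabilities, invokes conditional independence of outcomes given the prefix values, and then bounds the prefix-weighted average of success probabilities by its largest term, so the non-adaptive strategy simply reuses the single best measurement among those the adaptive strategy would have chosen. You instead eliminate adaptiveness by \emph{convex averaging}: the same prefix-weighted average is realized exactly as the success probability of the mixed POVM $\bar M^{r,y}_b=\sum_\pi w^{(y)}_\pi M^{r,\pi,y}_b$, preserving each $p^{(y)}$ rather than merely not decreasing $\bar p$. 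Each route buys something. The paper's max keeps the replacement measurement inside the original projective class, which matters because the trade-off function derivation (Lemma \ref{lemma:magic}) is carried out for orthonormal bases; your mixture is generally not projective, so to feed the resulting parallel strategy into that machinery you must re-invoke the projective-optimality result of \cite{Farkas18} (or simply note that the paper's definition of a decoding strategy allows arbitrary POVMs, so the lemma as stated is still proved). Conversely, your per-$y$ bookkeeping with the strengthened induction hypothesis (preserve $p^{(1)}$ and $p^{(2)}$ separately, with $y$-dependent weights) is tighter than the paper's treatment: the paper's conditional-independence equation conditions only on $B^1=s$ while the two ``correct'' events still share the randomness of $Y$, so strictly it holds only after also conditioning on $Y$---which is exactly the per-$y$ decomposition you adopt from the start. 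Two trivial loose ends to tidy in your write-up: define the weights arbitrarily when $p^{(y),<r}=0$ (then there is nothing to preserve for that $y$), and state the vacuous base case $r=1$ of the induction.
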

\begin{proof}
Let's assume we have a strategy that uses sequential adaptive measurements. Fix the choice
of all encoded states and measurements. Then, we show that there exists a
strategy without sequential adaptive measurements, that gives at least as high average
success probability, as the original one.
To show this, let us write the average success probability for the mentioned
sequential adaptive strategy as:
\begin{equation}
\begin{split}
\bar{p} & =\frac{1}{2d^2}\sum_{x,y}\mathbb{P}(B^1=x_y^1, B^2=x_y^2,\cdots
, B^r=x_y^r~|~X=x, Y=y) \\
& = \frac{1}{2d^2}\sum_{x,y}\mathbb{P}(B^1=\text{correct}, B^2=\text{correct}
,\cdots, B^r=\text{correct}~|~X=x, Y=y) \\
& = \mathbb{P}(B^1=\text{correct}, B^2=\text{correct},\cdots, B^r=
\text{correct}) \\
& = \mathbb{P}(B^r=\text{correct}~|~B^{r-1}=\text{correct},\cdots,
B^1=\text{correct})\mathbb{P}(B^{r-1}=\text{correct},\cdots, B^1=
\text{correct}) \\
& = \ldots = \prod_{k=r}^1\mathbb{P}(B^k=\text{correct}~|~B^{k-1}=
\text{correct},\cdots, B^1=\text{correct}),
\end{split}
\end{equation}
where we used the definition of conditional probability multiple times. By
construction, $B^k$ can only depend on such $B^j$s that $j<k$. Now, we
can use the fact, that the conditional probability is again a valid probability
measure, thus we can apply completeness of probabilities. Let us denote
$\prod_{k=r}^m\mathbb{P}(B^k=\text{correct}~|~B^{k-1}=
\text{correct},\cdots, B^1=\text{correct}) \equiv \mathcal{P}^m$. Then
\begin{equation}
\begin{split}
\bar{p} & = \mathcal{P}^3\cdot\mathbb{P}(B^2=\text{correct}~|~B^1=
\text{correct}) \mathbb{P}(B^1=\text{correct}) \\
& = \mathcal{P}^3\Big(\sum_{s=1}^{d_1}\mathbb{P}(B^2=\text{correct}~|~
B^1=\text{correct}, B^1=s)\mathbb{P}(B^1=s~|~B^1=\text{correct})\Big)
\mathbb{P}(B^1=\text{correct}) \\
& =\mathcal{P}^3\Big(\sum_{s=1}^{d_1}\mathbb{P}(B^2=\text{correct}~|~
B^1=\text{correct}, B^1=s)\mathbb{P}(B^1=s, B^1=\text{correct})\Big).
\end{split}
\end{equation}
We see that the events
$(B^1=\text{correct})$ and $(B^2=\text{correct})$ are independent when
conditioning on the value of $B^1$, i.e.
\begin{equation}\label{conditional_independence_D}
\begin{split}
& \mathbb{P}(B^2=\text{correct}, B^1=\text{correct}~|~B^1=s) \\
& =\mathbb{P}(B^1=\text{correct}~|~B^1=s)\mathbb{P}(B^2=\text{correct}
~|~B^1=s),
\end{split}
\end{equation}
for any $s\in\{1,\ldots,d_1\}$. This is because if we condition on the value
of $B^1$, we fix all the states and measurements (remember that the strategy is
fixed, and the only freedom is in the choice of measurement basis on qudit 2 (see Fig. \ref{fig:simplifiedRAC_b})).
Then, since our qudits are in a product state, their outcomes are independent.

From equation (\ref{conditional_independence_D}) it follows that
\begin{equation}
\mathbb{P}(B^2=\text{correct}~|~B^1=\text{correct}, B^1=s)=\mathbb{P}(B^2=
\text{correct}~|~B^1=s),
\end{equation}
and thus
\begin{equation}
\begin{split}
\bar{p} & = \mathcal{P}^3\Big(\sum_{s=1}^{d_1}\mathbb{P}(B^2=\text{correct}~|~
B^1=s)\mathbb{P}(B^1=s, B^1=\text{correct})\Big) \\
& \le \mathcal{P}^3\Big(\sum_{s=1}^{d_1}\mathbb{P}(B^2=\text{correct}
)\mathbb{P}(B^1=s, B^1=\text{correct})\Big) = \mathcal{P}^3\cdot
\mathbb{P}(B^2=\text{correct})\mathbb{P}(B^1=\text{correct}),
\end{split}
\end{equation}
where $\mathbb{P}(B^2=\text{correct})=\max_{s\in\{1,\ldots,d_1\}}\mathbb{P}
(B^2=\text{correct}~|~B^1=s)$, i.e. we choose the measurement basis which gives
the greatest success probability for qudit 2, hence eliminating adaptiveness on
this qudit. Now, we use the same reasoning in order to get rid of adaptiveness
on consequtive qudits. We show that this indeed works on qudit 3, and then the
idea generalizes trivially. At this point, we have that
\begin{equation}
\begin{split}
\bar{p} & = \mathcal{P}^4 \cdot\mathbb{P}(B^3=\text{correct}~|~B^2=
\text{correct}, B^1=\text{correct})\mathbb{P}(B^2=\text{correct})\mathbb{P}
(B^1=\text{correct}) \\
& = \mathcal{P}^4\Big(\sum_{s=1}^{d_2}\sum_{t=1}^{d_1}\mathbb{P}(B^3=
\text{correct}~|~B^2=\text{correct}, B^1=\text{correct}, B^2=s, B^1=t)
\\ & \times
\mathbb{P}(B^2=s~|~B^2=\text{correct})\mathbb{P}(B^1=t~|~B^1=\text{correct})
\Big) \mathbb{P}(B^2=\text{correct})\mathbb{P}(B^1=\text{correct}) \\
& = \mathcal{P}^4\Big(\sum_{s=1}^{d_2}\sum_{t=1}^{d_1}\mathbb{P}(B^3=
\text{correct}~|~B^2=\text{correct}, B^1=\text{correct}, B^2=s, B^1=t)
\\ & \times
\mathbb{P}(B^2=s, B^2=\text{correct})\mathbb{P}(B^1=t, B^1=\text{correct})
\Big)
\end{split}
\end{equation}
(here, we implicitly used the already proven fact that qudits 1 and 2 are independent of each other). Now the conditional independence goes as
\begin{equation}
\begin{split}
& \mathbb{P}(B^3=\text{correct}, B^2=\text{correct}, B^1=\text{correct}~|~
B^2=s, B^1=t) \\
& =\mathbb{P}(B^3=\text{correct}~|~B^2=s, B^1=t)\mathbb{P}(B^2=\text{correct}
, B^1=\text{correct} ~|~B^2=s, B^1=t),
\end{split}
\end{equation}
since fixing all measurement bases yields independent outcomes. From this it
follows that
\begin{equation}
\mathbb{P}(B^3=\text{correct}~|~B^2=\text{correct}, B^1=\text{correct},
B^2=s, B^1=t)
=\mathbb{P}(B^3=\text{correct}~|~ B^2=s, B^1=t),
\end{equation}
and thus
\begin{equation}
\begin{split}
\bar{p} & =
\mathcal{P}^4\Big(\sum_{s=1}^{d_2}\sum_{t=1}^{d_1}\mathbb{P}(B^3=
\text{correct}~|~B^2=s, B^1=t)
\mathbb{P}(B^2=s, B^2=\text{correct})\mathbb{P}(B^1=t, B^1=\text{correct})
\\ & \le
\mathcal{P}^4\Big(\sum_{s=1}^{d_2}\sum_{t=1}^{d_1}\mathbb{P}(B^3=
\text{correct})
\mathbb{P}(B^2=s, B^2=\text{correct})\mathbb{P}(B^1=t, B^1=\text{correct})
\Big) \\
& = \mathcal{P}^4\cdot\mathbb{P}(B^3=\text{correct})\mathbb{P}(B^2=
\text{correct})\mathbb{P}(B^1=\text{correct}),
\end{split}
\end{equation}
where $\mathbb{P}(B^3=\text{correct})=\max_{\substack{s\in\{1,\ldots,d_2\}\\
t\in\{1,\ldots,d_1\}}}{\mathbb{P}(B^3=\text{correct}~|~B^2=s, B^1=t)}$,
meaning that we choose the measurement basis that gives the greatest success
probability on qudit 3. It is clear now that this reasoning applies for all
qudits and thus
\begin{equation}
\bar{p}=\prod_{k=r}^1\mathbb{P}(B^k=\text{correct}),
\end{equation}
and it is a non-adaptive strategy.
\end{proof}

\subsection{Trade-Off Functions}

The usefulness of non-adaptive strategies is that in essence, Alice and Bob are playing $r$ QRACs in parallel (see Fig. \ref{fig:simplifiedRAC_b}). However, the optimal average success probability is not necessarily given by the independent optimal strategies on the individual subspaces. This is easily understood when one remembers that the winning condition is that $b=x_y$ as a whole, and no ``partial points" are awarded if only a part of the string is guessed correctly. Before proceeding, it is illustrative to look at the ASP once again, but written in the following way:

\begin{equation}
\begin{split}
\bar{p} &= \frac{1}{2} \left[ \frac{1}{d^2} \left( \sum_{x_1,x_2}  \PP(B=x_1|X=x_1x_2 , Y=1) \right) + \frac{1}{d^2} \left( \sum_{x_1,x_2} \PP(B=x_2|X=x_1x_2 , Y=2)     \right) \right]\\
&= \frac{1}{2} \left[ \PP(\text{Bob correctly guesses } x_1) + \PP(\text{Bob correctly guesses } x_2)     \right],
\end{split}
\end{equation} where we have defined $\PP(\text{Bob correctly guesses } x_y)$ as the average probability of success, if the $y$-th dit is asked. Let us remark that these probabilities are not independent and are clearly strategy dependent. It is this first dependency that will be our object of study:

\begin{definition}{$~$}\nl
\label{qtradeoff}
Let $z=\PP(\text{Bob correctly guesses } x_1)$. Then we define the \textbf{quantum  trade-off function} $\mathcal{M}^{q}_d (z)$ in dimension $d$ as:
\begin{equation}
\mathcal{M}^{q}_d (z) = \underset{ (\ee , \{ M^y_l\}_l)}{\max} \{ \PP(\text{Bob correctly guesses } x_2)  | \PP(\text{Bob correctly guesses } x_1)=z \} ,
\end{equation}
where the maximization is limited to all quantum encoding-decoding strategies which respect the condition of guessing $x_1$.
\end{definition}

In fact, one could formally write the optimal ASP in terms of the trade-off function as:

\begin{equation}
\label{eq:asp2}
\bar{p}_{Q_d} = \underset{z\in[\frac{1}{d},1]}{\max} \SP \frac{1}{2} \left[ z + \mathcal{M}^q_d(z) \right].
\end{equation}

We will devote a later Lemma (\ref{lemma:magic}) to investigating the functional form of the quantum $\mathcal{M}^{q}_d$. For now, we return to the problem of the $r$ QRACs in parallel. When writing out the average success probability, we have to calculate the probability that Alice and Bob win given inputs $x_1,x_2,y$. That is,

\begin{equation}
\begin{split}
\PP(B=x_y|X=x_1x_2 , Y=y) &= \PP(B^1=x^1_y , B^2=x^2_y  , \dots , B^{r} = x^{r}_y | X=x_1x_2 , Y=y) \\
&= \prod_{k=1}^{r} \PP(B^k=x^k_y|X=x_1x_2 , Y=y).
\end{split}
\end{equation}

The first equality is just expanding the dits into $r$ substrings ($B=B^1B^2\dots B^r$ and $x_y = x_y^1 x_y^2 \dots x_y^r$). To obtain the second equality, we use the fact that the QRACs are independent. According to Lemmas \ref{lemma:id} and \ref{lemma:NonAdapt}, Bob will use identity decoding on each measurement and output $b=b^1b^2\dots b^{r}$ as a guess for $x_y$. This in turn implies that the $k$th information carrier only has information about $x^k_1$ and $x^k_2$, i.e. $\PP(B^k=x^k_y|X=x_1x_2 , Y=y)$ only depends on $x^k_{1}$ and $x^k_2$.

Hence, let us define
\begin{equation}
\PP(\text{Bob correctly guesses } x^k_y) = \frac{1}{(d_k)^2} \sum_{x^k_1,x^k_2\in[d_k]} \PP(B^k=x^k_y|X^k=x_1^kx_2^k , Y=y).
\end{equation}

Then, Alice and Bob are trying to maximize the following global expression:
\begin{equation}
\label{eq:aspmagic}
\bar{p}_{Q_{d_1}Q_{d_2}\dots Q_{d_{r}}} = \underset{z^1\in[\frac{1}{d_1},1],z^2\in[\frac{1}{d_2},1],\dots, z^{r}\in[\frac{1}{d_{r}},1]}{\max} \SP\frac{1}{2} \left[ z^1  z^2  \dots z^r  + \mathcal{M}^q_{d_1}(z^1)\mathcal{M}^q_{d_2}(z^2)\dots \mathcal{M}^q_{d_{r}}(z^{r})\right].
\end{equation}

By optimizing \eqref{eq:aspmagic} , we are able to calculate the average success probability for separable states, and compare it to the optimal average success probability of \eqref{eq:asp2}. We now turn to showing the form of $\mathcal{M}^q_d(z)$.

\begin{lemma}{$~$}\label{lemma:magic}\nl
The following are equivalent forms of $\mathcal{M}^q_d(z)$:
\begin{equation}\label{eq:magic1}
\mathcal{M}^q_d(z) = 1 - \left(\frac{d-1}{d} \right) \left( \sqrt{z} - \sqrt{\frac{1-z}{d-1}}\right)^2,
\end{equation}
\begin{equation}\label{eq:magic2}
\mathcal{M}^q_d(z) = \cos^2\left( \cos^{-1}\left(\frac{1}{\sqrt{d}}\right) - \cos^{-1}\left(\sqrt{z}\right) \right).
\end{equation}
Furthermore, they are achieved when Bob's measurement bases are mutually unbiased.
\end{lemma}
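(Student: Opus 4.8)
The plan is to reduce the optimization defining $\mathcal{M}^q_d(z)$ to an elementary two-dimensional geometry problem, handled one input pair at a time, and then to reassemble the global answer by a symmetrization argument. By the preceding reductions I may take Alice's encodings to be pure states $|\psi_{x_1x_2}\rangle$ and Bob's two decodings to be rank-$1$ projective measurements in orthonormal bases $\{|e^1_l\rangle\}_l$ and $\{|e^2_l\rangle\}_l$. With this, $z = \frac{1}{d^2}\sum_{x_1,x_2}|\langle e^1_{x_1}|\psi_{x_1x_2}\rangle|^2$ and the quantity to be maximized is $\frac{1}{d^2}\sum_{x_1,x_2}|\langle e^2_{x_2}|\psi_{x_1x_2}\rangle|^2$, where each state $|\psi_{x_1x_2}\rangle$ enters exactly one term of each sum.

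First I would argue that for fixed bases the problem decouples over the $d^2$ input pairs: the term indexed by $(x_1,x_2)$ involves only the two target vectors $|e^1_{x_1}\rangle$ and $|e^2_{x_2}\rangle$, so an optimal $|\psi_{x_1x_2}\rangle$ may be taken inside their (at most two-dimensional) span, and after absorbing global phases the whole configuration becomes real and planar. Writing $\gamma_{x_1x_2} = \arccos|\langle e^1_{x_1}|e^2_{x_2}\rangle|$ for the angle between the two targets and $\alpha_{x_1x_2} = \arccos\sqrt{z_{x_1x_2}}$ for the angle between $|\psi_{x_1x_2}\rangle$ and $|e^1_{x_1}\rangle$, the triangle inequality for angles on the circle forces the angle to $|e^2_{x_2}\rangle$ to be at least $|\gamma_{x_1x_2}-\alpha_{x_1x_2}|$, with equality attained by placing $|\psi_{x_1x_2}\rangle$ on the short arc. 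Hence the best achievable per-pair $x_2$-success is $\cos^2(\gamma_{x_1x_2}-\alpha_{x_1x_2})$.

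I would then specialize to mutually unbiased bases, where $|\langle e^1_i|e^2_j\rangle| = 1/\sqrt{d}$ for every pair, so that $\gamma_{x_1x_2} \equiv \arccos(1/\sqrt d)$ is constant. If in addition every pair is assigned the same first-success $z_{x_1x_2}=z$, then $\alpha_{x_1x_2}\equiv \arccos\sqrt z$ and averaging returns precisely $\mathcal{M}^q_d(z) = \cos^2(\arccos(1/\sqrt d) - \arccos\sqrt z)$, which is \eqref{eq:magic2}; expanding with $\cos(\gamma-\alpha) = \cos\gamma\cos\alpha + \sin\gamma\sin\alpha = \tfrac{1}{\sqrt d}\sqrt z + \sqrt{\tfrac{d-1}{d}}\sqrt{1-z}$ and squaring then yields, after collecting terms, the algebraically identical \eqref{eq:magic1}. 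This simultaneously exhibits the MUB strategy that attains the stated value, and I would verify the equivalence of \eqref{eq:magic1} and \eqref{eq:magic2} as a standalone identity so that either form may be used downstream in \eqref{eq:aspmagic}.

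The remaining, and I expect hardest, step is to show this value is actually optimal, i.e.\ that the uniform MUB configuration maximizes the average. This splits into two parts. For the allocation I would check that the per-pair trade-off $z_{x_1x_2}\mapsto \cos^2(\gamma - \arccos\sqrt{z_{x_1x_2}})$ is concave in $z_{x_1x_2}$ on $[1/d,1]$, so that Jensen's inequality at fixed average $z$ makes the uniform choice optimal; this concavity is the delicate calculus estimate. For the choice of bases I would invoke the established optimality of rank-$1$ projective MUB measurements for two-input QRACs from \cite{Farkas18}, together with the completeness constraint $\sum_j|\langle e^1_i|e^2_j\rangle|^2 = 1$, which forces any improvement on some overlaps to be compensated on others, to rule out non-MUB bases beating the symmetric value.
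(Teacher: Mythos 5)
Your achievability half is sound and is essentially the paper's own argument in geometric dress: the reduction of each pair $(x_1,x_2)$ to the two-dimensional span of the target vectors, the optimal phase choice making everything real and planar, and the per-pair optimum $\cos^2(\gamma_{x_1x_2}-\alpha_{x_1x_2})$ all match the paper's computation leading to \eqref{eq:magicpf1} (with the $+$ sign), and your expansion of $\cos(\gamma-\alpha)$ correctly identifies \eqref{eq:magic2} with \eqref{eq:magic1}.

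The genuine gap is in the optimality half, precisely the step you flag as hardest, and your proposed split into (a) allocation and (b) bases does not go through. For (a): Jensen's inequality requires every pair to carry the \emph{same} concave function, i.e.\ a constant overlap angle $\gamma_{x_1x_2}\equiv\arccos(1/\sqrt d)$ --- which is the MUB hypothesis you are trying to establish. For general bases the overlaps $s_{x_1x_2}=|\la e^1_{x_1}|e^2_{x_2}\ra|^2$ vary across pairs, the per-pair functions differ, and the optimal allocation of the $z_{x_1x_2}$ is a non-uniform equal-derivative (water-filling) solution, so (a) presupposes (b). For (b): the result of \cite{Farkas18} concerns the unconstrained ASP maximum, i.e.\ $\max_z \frac12[z+\mathcal{M}^q_d(z)]$; it does not say that MUBs maximize the \emph{constrained} guessing probability at every fixed $z\in[\frac1d,1]$, which is what the trade-off function requires --- and the stronger statement is genuinely needed downstream, because in \eqref{eq:aspmagic} each subsystem operates at a $z^k$ that is generally not its own ASP optimum. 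The remark that completeness ``forces compensation'' is an intuition, not a proof. The paper closes exactly this gap with a single joint argument: write the objective as $\sum_x f(s_x,z_x)$ with $f(s,z)=sz+\sqrt{s(1-s)z(1-z)}$, impose the two linear constraints $\sum_x s_x=d$ (completeness of the bases summed over all $d^2$ pairs) and $\sum_x z_x=zd^2$, note as in \eqref{eq:matrixmajorization} that the uniform matrix with columns $(\tfrac1d,z)^T$ is obtained from any feasible matrix by right multiplication with a doubly stochastic matrix, and invoke joint concavity of $f$ in \emph{both} variables together with \cite[Proposition 15.A.4]{inequalities} to conclude $s_x=\tfrac1d$ and $z_x=z$ simultaneously --- so MUB optimality and uniform allocation come out of one majorization step rather than being argued separately. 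To repair your proof you would need either to reproduce such a joint bound or to prove independently that non-MUB bases cannot help at any fixed $z$; neither follows from the ingredients you cite.
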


\begin{proof}
Let Bob's decoding bases be $\{|\psi_k\ra\}_k$, and $\{|\phi_k\ra\}_k$, corresponding to $y=1$ and $2$, respectively. Given inputs $x_1,x_2$, Alice's best strategy is to encode a superposition of $|\psi_{x_1}\ra$ and $|\phi_{x_2}\ra$. Having any orthogonal components to these states will drop her average success probability and hence those strategies will not appear in the maximization performed for the trade-off function. Explicitly:
\begin{equation}
\ee(x)=|x \ra = \frac{1}{\sqrt{N}} \left( t|\psi_{x_1}\ra + e^{\ii \zeta}(1-t) |\phi_{x_2}\ra \right) ,
\end{equation}
where $N = 1 + 2t(1-t)\left(\Re[e^{\ii \zeta} \la \psi_{x_1} | \phi_{x_2} \ra] -1 \right)$ is a normalization factor, $t\in[0,1]$ is a parameter that will vary to change Bob's probability of correctly guessing the first dit, and $\zeta\in[0,2\pi)$ is a phase. It can be verified that $\zeta = - \text{Arg}(\la \psi_{x_1} | \phi_{x_2} \ra)$, i.e. $e^{\ii \zeta} \la \psi_{x_1} | \phi_{x_2} \ra \in \RR^+$  simultaneously maximizes both $|\la \psi_{x_1} | x \ra|^2$ and $|\la \phi_{x_2} | x \ra|^2$, for all $t\in[0,1]$. These are the probabilities of Bob correctly guessing $x_1$ and $x_2$, respectively. With this choice of $\zeta$ then:
\begin{equation}
\label{eq:guessx1}
z_{x} \equiv |\la \psi_{x_1} | x \ra|^2 = \frac{\left(t + \sqrt{s_{x}}(1-t)\right)^2}{N} ,
\end{equation}
\begin{equation}
\label{eq:guessx2}
|\la \phi_{x_2} | x \ra|^2 = \frac{\left(t\sqrt{s_{x}} + (1-t)\right)^2}{N} ,
\end{equation}
where $s_{x} = |\la \psi_{x_1} | \phi_{x_2} \ra|^2$. Inverting equation \eqref{eq:guessx1} to have $t=t(z_{x},s_{x})$:
\begin{equation}
t = \frac{-z_{x}+\sqrt{s_{x}}(\sqrt{s_{x}}+z_{x}-1)\pm \sqrt{(s_{x}-1)z_{x}(z_{x}-1)}}{(\sqrt{s_{x}}-1)(\sqrt{s_{x}}-1+2z_{x})}.
\end{equation}
Then, inserting it into \eqref{eq:guessx2} we obtain the probability of correctly guessing the second dit, as a function of the probability of correctly guessing the first ($z_x$).
\begin{equation}
\label{eq:magicpf1}
|\la \phi_{x_2} | x \ra|^2 = (1-z_x) + s_x (2z_x -1) \pm 2\sqrt{s_x(s_x -1)z_x(z_x-1)}.
\end{equation}
We take the positive sign, since we want to maximize the average success probability. Hence, we are trying to maximize the expression:
\begin{equation}
\label{eq:magicasp22}
\bar{p} = \underset{\{|\psi_k\ra\} , \{|\phi_k\ra\} }{\max} \SP \frac{1}{2d^2} \sum_{x} \left( 1 + s_x (2z_x -1) + 2\sqrt{s_x(s_x -1)z_x(z_x-1)}  \right),
\end{equation}
subject to the conditions $0\le s_x,z_x\le1$, $\sum_x s_x = d$, and $\sum_x z_x = z d^2$, where $z=\PP(\text{Bob correctly guesses }x_1)$.

The non-constant part of the above expression can be written as $\sum_xf(s_x,
z_x)$, where $f(s_x,z_x)=s_xz_x+\sqrt{s_x(1-s_x)z_x(1-z_x)}$. This sum is a
function of the
2-by-$d^2$ matrix $S=\binom{\vec{s}^T}{\vec{z}^T}$, where the $x$-th element of
the vector $\vec{s}$ ($\vec{z}$) is $s_x$ ($z_x$). Note that for any matrix $S$
satisfying the constraints on the $s_x$ and $z_x$,
\begin{equation}\label{eq:matrixmajorization}
S^\ast\equiv\begin{pmatrix}
\frac1d & \frac1d & \ldots & \frac1d \\
z & z & \ldots & z \end{pmatrix}=
S\begin{pmatrix}
\frac{1}{d^2} & \ldots & \frac{1}{d^2} \\
\vdots & ~ & \vdots \\
\frac{1}{d^2} & \ldots & \frac{1}{d^2}\end{pmatrix}.
\end{equation}
Here, the last matrix is doubly stochastic, and hence we say that any
matrix $S$ satisfying the constraints on the $s_x$ and $z_x$ majorizes
$S^\ast$ (see \cite[Definition 15.A.2]{inequalities}). But this is equivalent
(\cite[Proposition 15.A.4]{inequalities}) to the statement that $\sum_x\phi(s_x,
z_x)\le\sum_x\phi(\frac1d,z)$ for all continuous concave functions
$\phi:\mathbb{R}^2\to\mathbb{R}$. It is straightforward to show that the
function $f(s_x,z_x)$ is concave (i.e. its Hessian is negative semi-definite)
on the domain $[0,1]\times[0,1]$,
and hence, considering the above, the ASP (Eq. \eqref{eq:magicasp22}) is
maximized by $s_x=\frac1d$ and $z_x=z$ for all $x$. Substituting
these into Eq. \eqref{eq:magicpf1} we get the form of the trade-off function:
\begin{equation}
\label{eq:magicasp2}
M^q_d(z) = 1-z + \frac{2z-1}{d} + 2 \frac{\sqrt{(d-1)z(1-z)}}{d},
\end{equation}
which can be furthered simplified into \eqref{eq:magic1}.

To obtain the other form of $\mathcal{M}^q_d(z)$ we can visualize the problem geometrically, by regarding the angle $\theta$ between two state vectors $|\xi\ra$ and $|\chi\ra$ to be $\theta = \cos^{-1}\left(|\la \xi|\chi \ra|\right)$. We have shown that the trade-off function is obtained when Bob uses two mutually unbiased bases, hence the measurement vectors $|\psi_{x_1}\ra$ and $|\phi_{x_2}\ra$ have an angle of $\cos^{-1}\left(d^{-1/2}\right)$ between them. Alice's encoded state $|x\ra$ must lie on the plane of the measurement vectors and the angle between $|x\ra$ and $|\psi_{x_1}\ra$ is $\cos^{-1}\left(\sqrt{z}\right)$. The trade-off function \eqref{eq:magic2} is then obtained when we see that the angle between $|x\ra$ and $|\phi_{x_2}\ra$ is the difference of the two angles described above.
\end{proof}

Notice that in the discussion following \eqref{eq:matrixmajorization} it was shown that $s_x = |\la \psi_{x_1} | \phi_{x_2}\ra |^2 = 1/d$ for all $x$. This is precisely the MUB condition on Bob's measurements.  To arrive at Alice's optimal strategy we need to maximize \eqref{eq:aspmagic}, using the derived representation \eqref{eq:magic2} of $\mathcal{M}^q_d(z)$. The maximization can easily done by setting $\frac{d \bar{p}_{Q_d}}{dz} = 0$, to find $z_{\max}$. Explicitly:
\begeq
z_{\max} = \mathcal{M}^q_d(z_{\max}) = \frac{1}{2} \left( 1 + \frac{1}{\sqrt{d}} \right) .
\endeq
This means that the best strategy for Alice is to encode every state $|x\ra$ into an equal superposition of $|\psi_{x_1}\ra$ and $|\phi_{x_2}\ra$ in order for the success probability to be the same, no matter which basis Bob chooses to do a measurement in. We put this into a corollary:
\begin{corollary}
\label{cor:optimalstrat}
For $2^d \rw 1$ QRACs, the optimal average success probability is achieved when Bob uses two mutually unbiased bases $\left( \{|\psi_{x_1}\ra \}_{x_1} , \{|\phi_{x_2}\ra \}_{x_2} \right)$, and Alice encodes her inputs into states $|x_1x_2\ra$ which are equal superpositions of $|\psi_{x_1}\ra$ and $|\phi_{x_2}\ra$.
\end{corollary}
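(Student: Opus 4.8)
The plan is to derive the corollary directly from Lemma~\ref{lemma:magic} together with the scalar optimization in Eq.~\eqref{eq:asp2}. Lemma~\ref{lemma:magic} already establishes that, for any fixed probability $z$ of guessing $x_1$, the largest attainable probability of guessing $x_2$ equals $\mathcal{M}^q_d(z)$ and is achieved precisely when Bob's two bases are mutually unbiased. Consequently the maximization over all encoding--decoding strategies collapses to maximizing $\frac{1}{2}\left[z + \mathcal{M}^q_d(z)\right]$ over $z\in[1/d,1]$, and the MUB claim becomes immediate: whatever value $z_{\max}$ realizes the global maximum, the matching $\mathcal{M}^q_d(z_{\max})$ is attained by a mutually unbiased measurement, so an optimal strategy uses MUBs.

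It then remains to perform the one-variable maximization and to read off the corresponding state. First I would insert the trigonometric form \eqref{eq:magic2}, setting $a=\cos^{-1}(1/\sqrt{d})$ (a constant) and $\theta=\cos^{-1}(\sqrt{z})$, so that the objective becomes $\frac{1}{2}\left[\cos^2\theta+\cos^2(a-\theta)\right]$. Differentiating in $\theta$ and setting the result to zero yields $\sin(2\theta)=\sin(2a-2\theta)$, whose relevant root is $\theta=a/2$. Substituting back gives $z_{\max}=\cos^2(a/2)=\frac{1}{2}(1+\cos a)=\frac{1}{2}\left(1+\frac{1}{\sqrt{d}}\right)$, and the symmetry $\theta=a-\theta$ forces $\mathcal{M}^q_d(z_{\max})=z_{\max}$. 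A quick comparison against the endpoints $z=1/d$ and $z=1$ (or a check that the objective is concave) confirms that this interior stationary point is the global maximum.

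The last step is to convert the symmetry $z_{\max}=\mathcal{M}^q_d(z_{\max})$ into a statement about Alice's encoding. This equality says that Bob guesses $x_1$ and $x_2$ with the same probability, i.e.\ $|\la\psi_{x_1}|x\ra|^2=|\la\phi_{x_2}|x\ra|^2$. Equating the positive square roots of \eqref{eq:guessx1} and \eqref{eq:guessx2} gives $(1-\sqrt{s_x})(2t-1)=0$, and since the MUB condition imposes $s_x=1/d\neq1$ this forces $t=1/2$. Hence the optimal encoded state $|x\ra$ is the balanced superposition of $|\psi_{x_1}\ra$ and $|\phi_{x_2}\ra$, exactly as the corollary asserts.

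Because Lemma~\ref{lemma:magic} does the heavy lifting, I expect no genuine obstacle here; the statement is essentially a packaging of that lemma with the maximization. The only places demanding care are confirming that the interior critical point beats the endpoints of $[1/d,1]$, so that the optimum is not a boundary solution, and extracting $t=1/2$ cleanly from the equal-probability condition rather than from a spurious factor.
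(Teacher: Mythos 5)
Your proposal is correct and follows essentially the same route as the paper: invoke Lemma~\ref{lemma:magic} to reduce the problem to the scalar maximization of $\frac{1}{2}\left[z+\mathcal{M}^q_d(z)\right]$ (Eq.~\eqref{eq:asp2}), find the stationary point $z_{\max}=\mathcal{M}^q_d(z_{\max})=\frac{1}{2}\left(1+\frac{1}{\sqrt{d}}\right)$, and conclude the MUB and equal-superposition claims. Your write-up is in fact slightly more careful than the paper's, which asserts the equal-superposition conclusion verbally, whereas you extract $t=1/2$ explicitly from equating \eqref{eq:guessx1} and \eqref{eq:guessx2} under $s_x=1/d$ and also check the endpoints of $[1/d,1]$.
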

Note that this optimal quantum strategy for $d$-dimensional $2^d\rw 1$ QRACs has been discussed in \cite{ABMP}. The optimal encoding strategy for Alice involves encoding her state into the eigenvector corresponding to the highest eigenvalue of the operator $\left( |\psi_{x_1}\ra \la \psi_{x_1} | + | \phi_{x_2} \ra \la \phi_{x_2} | \right)$. This is the state given in Equation (2) of the main text.

For completeness, we also define the \textit{classical trade-off function} $\mathcal{M}_d^c(z)$ in an analogous way to Definition \ref{qtradeoff}, except that the maximization is done over classical encoding-decoding strategies. In fact, this function is linear:
\begin{equation}
\label{eq:classicalmagic}
\mathcal{M}_d^c(z) = \frac{d+1}{d} - z.
\end{equation}
This can easily be checked, since the optimal success probability for $2^d\rw 1$ RACs  is known to be $\bar{p}_{C_d}=(d+1)/2d$ \cite{Czechlewski18}. This success probability can be obtained by the pure coding schemes of just sending the first or second dit, and all convex combinations of these strategies would give the same maximum. See Fig. \ref{fig:tradeoff} for a visualization of the trade-off functions with varying dimensions. Note, however, that classical strategies factorize, so that we never use the trade-off functions in this setting alone, but only in conjunction with the quantum functions, e.g. if Alice is able to encode her input dits into quantum systems of dimensions $d_1, d_2, \dots, d_{r-1}$ and the rest of the information of dimension $d_r$ classically, we would have to maximize:
\begin{equation}
\label{eq:CQaspmagic}
\bar{p}_{Q_{d_1}Q_{d_2}\dots Q_{d_{r-1}}C_{d_{r}}} = \underset{z^1\in[\frac{1}{d_1},1],\dots, z^{r}\in[\frac{1}{d_{r}},1]}{\max} \SP\frac{1}{2} \left[ z^1  z^2  \cdots z^r  + \mathcal{M}^q_{d_1}(z^1)\cdots \mathcal{M}^q_{d_{r-1}}(z^{r-1}) \mathcal{M}^c_{d_{r}}(z^{r})\right].
\end{equation}

\begin{figure}[h]
\vspace{0.5cm}
\begin{center}
\includegraphics[scale=0.5]{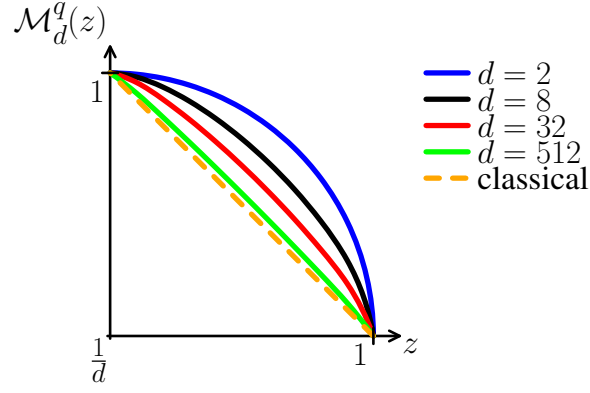}
\end{center}
\caption[]{Visualization of the quantum trade-off functions $\mathcal{M}_d^q(z)$, with varying dimensions.}
\label{fig:tradeoff}
\end{figure}

\subsection{Two Examples}
\subsubsection{d=39}
Here, we take the case $d=39$ into consideration, which will highlight the necessity of the trade-off functions. We have that $\bar{p}_{Q_{39}} = \frac{1}{2}\left(1 + \frac{1}{\sqrt{39}} \right) \approx 0.5801$. Now, we wish to know the optimal ASP if the preparation and measurement are split in terms of two systems with dimensions $d_1=13$ and $d_2=3$. Numerically we optimize \eqref{eq:aspmagic}:
\begin{equation}
\label{eq:example39}
\bar{p}_{Q_{13}Q_{3}} = \underset{z^1\in[\frac{1}{13},1],z^2\in[\frac{1}{3},1]}{\max} \SP\frac{1}{2} \left[ z^1  z^2 + \mathcal{M}^q_{13}(z^1)\mathcal{M}^q_{3}(z^2)\right] \approx 0.5217.
\end{equation}
A contour plot of of the function being maximized \eqref{eq:example39} with the maxima highlighted can be seen in Fig. \ref{fig:example}. In fact, the maximum is obtained in two different points. Let $(z^1,z^2) = (0.1944,0.4302)$ be the first point, then in fact $\left(\mathcal{M}^q_{13}(0.1944),\mathcal{M}^q_{3}(0.4302)\right)= (0.9695,0.9900)$ is the other point which achieves the maximum. The first point, where both $z^1$ and $z^2$ are relatively small, the strategy gives a strong bias to guessing the second dit $x_2$ at the expense of lowering the probability of correctly guessing the first input $x_1$. Explicitly for the first point; $\PP(\text{Bob correctly guesses } x_1)=z^1z^2 \approx 0.0836$, whereas $\PP(\text{Bob correctly guesses } x_2)=\mathcal{M}^q_{13}(z^1)\mathcal{M}^q_{3}(z^2) \approx 0.9598$. It is clear then, that the second point which achieves the maximum is just a reflection of this strategy, now giving a positive bias towards guessing the first dit.

\begin{figure}[h]
\begin{center}
\includegraphics[scale=0.5]{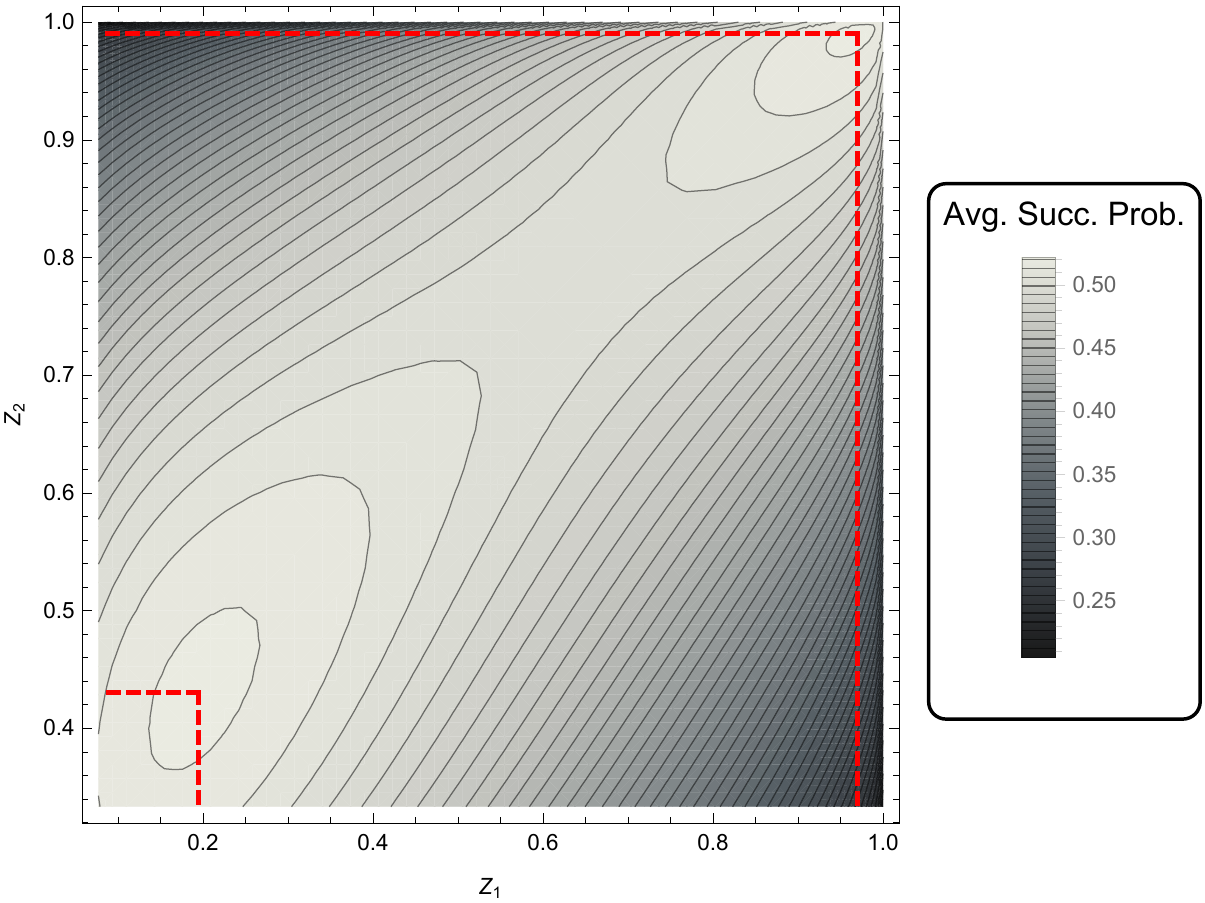}
\end{center}
\caption[]{Contour plot of \eqref{eq:example39}, for the example $d=39$. See text for details. }
\label{fig:example}
\end{figure}

To conclude, we explicitly see that $\bar{p}_{Q_{13}Q_{3}} > \bar{p}_{Q_{13}} \bar{p}_{Q_{3}} \approx 0.5037$. That is, even though Alice and Bob are using two non-interacting Hilbert spaces, the optimal strategy is a global one, instead of playing strictly independent QRACs.

\subsubsection{d=1024}
Now, we look at the case $d=1024$, the dimension we certify in our experiment. We compute the optimal success probabilities for all possible quantum partitions of a 1024-dimensional quantum system. The values were calculated using Eq. \eqref{eq:CQaspmagic}. The aim here is to show that $Q_{512}Q_2$ was the relevant bound for the experiment, and not e.g.\ $Q_{32}Q_{32}$ or any other partition. See Table \ref{table1024}. 

\begin{table}
\centering
\begin{tabular}{c@{\hspace{2cm}}c}\hline \hline
Case & Optimal $\bar{p}$ \\ \hline
 $Q_{1024} $ & 0.515625 \\
 $Q_{512}  Q_{2} $ & 0.500980 \\
 $Q_{256}  Q_{4} $ & 0.500654 \\
 $Q_{256}  Q_{2}  Q_{2} $ & 0.500653 \\
 $Q_{128}  Q_{8} $ & 0.500563 \\
 $Q_{128}  Q_{4}  Q_{2} $ & 0.500561 \\
 $Q_{128}  Q_{2}  Q_{2}  Q_{2} $ & 0.500560 \\
 $Q_{64}  Q_{16} $ & 0.500530 \\
 $Q_{64}  Q_{8}  Q_{2} $ & 0.500525 \\
 $Q_{64}  Q_{4}  Q_{4} $ & 0.500524 \\
 $Q_{64}  Q_{4}  Q_{2}  Q_{2} $ & 0.500523 \\
 $Q_{64}  Q_{2}  Q_{2}  Q_{2}  Q_{2} $ & 0.500523 \\
 $Q_{32}  Q_{32} $ & 0.500521 \\
 $Q_{32}  Q_{16}  Q_{2} $ & 0.500512 \\
 $Q_{32}  Q_{8}  Q_{4} $ & 0.500509 \\
 $Q_{32}  Q_{8}  Q_{2}  Q_{2} $ & 0.500508 \\
 $Q_{32}  Q_{4}  Q_{4}  Q_{2} $ & 0.500507 \\
 $Q_{32}  Q_{4}  Q_{2}  Q_{2}  Q_{2} $ & 0.500507 \\
 $Q_{32}  Q_{2}  Q_{2}  Q_{2}  Q_{2}  Q_{2} $ & 0.500506 \\
 $Q_{16}  Q_{16}  Q_{4} $ & 0.500505 \\
 $Q_{16}  Q_{16}  Q_{2}  Q_{2} $ & 0.500504 \\
 $Q_{16}  Q_{8}  Q_{8} $ & 0.500503 \\
 $Q_{16}  Q_{8}  Q_{4}  Q_{2} $ & 0.500501 \\
 $Q_{16}  Q_{8}  Q_{2}  Q_{2}  Q_{2} $ & 0.500501 \\
 $Q_{16}  Q_{4}  Q_{4}  Q_{4} $ & 0.500500 \\
 $Q_{16}  Q_{4}  Q_{4}  Q_{2}  Q_{2} $ & 0.500500 \\
 $Q_{16}  Q_{4}  Q_{2}  Q_{2}  Q_{2}  Q_{2} $ & 0.500499 \\
 $Q_{16}  Q_{2}  Q_{2}  Q_{2}  Q_{2}  Q_{2}  Q_{2} $ & 0.500499 \\
 $Q_{8}  Q_{8}  Q_{8}  Q_{2} $ & 0.500499 \\
 $Q_{8}  Q_{8}  Q_{4}  Q_{4} $ & 0.500498 \\
 $Q_{8}  Q_{8}  Q_{4}  Q_{2}  Q_{2} $ & 0.500498 \\
 $Q_{8}  Q_{8}  Q_{2}  Q_{2}  Q_{2}  Q_{2} $ & 0.500497 \\
 $Q_{8}  Q_{4}  Q_{4}  Q_{4}  Q_{2} $ & 0.500497 \\
 $Q_{8}  Q_{4}  Q_{4}  Q_{2}  Q_{2}  Q_{2} $ & 0.500496 \\
 $Q_{8}  Q_{4}  Q_{2}  Q_{2}  Q_{2}  Q_{2}  Q_{2} $ & 0.500496 \\
 $Q_{8}  Q_{2}  Q_{2}  Q_{2}  Q_{2}  Q_{2}  Q_{2}  Q_{2} $ & 0.500495 \\
 $Q_{4}  Q_{4}  Q_{4}  Q_{4}  Q_{4} $ & 0.500496 \\
 $Q_{4}  Q_{4}  Q_{4}  Q_{4}  Q_{2}  Q_{2} $ & 0.500495 \\
 $Q_{4}  Q_{4}  Q_{4}  Q_{2}  Q_{2}  Q_{2}  Q_{2} $ & 0.500495 \\
 $Q_{4}  Q_{4}  Q_{2}  Q_{2}  Q_{2}  Q_{2}  Q_{2}  Q_{2} $ & 0.500494 \\
 $Q_{4}  Q_{2}  Q_{2}  Q_{2}  Q_{2}  Q_{2}  Q_{2}  Q_{2}  Q_{2} $ & 0.500494 \\
 $Q_{2}  Q_{2}  Q_{2}  Q_{2}  Q_{2}  Q_{2}  Q_{2}  Q_{2}  Q_{2}  Q_{2} $ & 0.500493 \\ \hline \hline
\end{tabular}
\caption{All quantum cases for a 1024-dimensional system and the
respective optimal ASPs considering each product structure.}\label{table1024}
\end{table}

Notice that, since $\mathcal{M}^q_d (z) > \mathcal{M}^c_d (z)$, there is no need to calculate the classical-quantum partitions, as they would clearly be worse than the equivalent fully quantum partition. However, it is interesting to note that $Q_{512}C_2 > Q_{256}Q_4$.

\section{Experimental Considerations} \label{sec:Exps}

In this section, we deal with the analysis supporting our photonic experiment in dimension $d=1024$.

\subsection{Useful Representation of the MUBs}
From a theoretical point of view, any two mutually unbiased bases in dimension $d=1024$ would yield the optimal average success probability. However, in our optical setup, for simplicity it is better to consider a representation of the two MUBs which have only matrix elements given by $\pm 1$.  Thus requiring only phase-modulations of $0$ or $\pi$ to be addressed by the SLMs to encode and decode the required states. To construct such MUBs in dimension 1024, we first consider two MUBs in dimension 4:
\begin{eqnarray}
MUB^{d=4}_1&=&\frac12\left(
\begin{array}{cccc}
1 & 1 & 1 & 1\\
1 & -1 & 1 & -1 \\
1 & 1 & -1 & -1 \\
1 & -1 & -1 & 1
\end{array}
\right), \\
MUB^{d=4}_2&=&\frac12\left(
\begin{array}{cccc}
1 & -1 & 1 & 1 \\
1 & -1 & -1 & -1 \\
1 & 1 & 1 & -1 \\
-1 & -1 & 1 & -1
\end{array}
\right).
\end{eqnarray}
Now, if we consider the following tensor products:
\begin{equation}
MUB_1=(MUB_1^{d=4})^{\otimes 5}, \qquad MUB_2=(MUB_2^{d=4})^{\otimes 5},
\end{equation}
we end up with two MUBs in dimension 1024, where the columns represent the basis states.

\subsection{Single Detector Scheme}
In our photonic experiment, we are dealing with a very large dimension ($d=1024$). The protocol requires Bob to perform a full von Neumann projective measurement on one of two bases before outputting his guess $b$. In the laboratory this would translate to having $1024$ different photo-detectors associated to each of the eigenvalues of the measurement performed, which is practically impossible. However, one can simulate a full $d$-outcome projective measurement to overcome this limitation, as it has been commonly done in the field of high-dimensional quantum information processing \cite{DIDW14,QKD16,Pan2016,Fickler2012}. The basic idea is that Bob uses a flexible detector scheme, which can project the incoming state to each one of the MUBs states required in the protocol. Thus, estimating the probability for each basis state collapse individually with only one detector.

In this case, one uses an extra randomly uniform input $j\in[d]$ on Bob's side. Depending on his inputs $y,j$ Bob will measure the operators $\{|m_{j}^{y}\ra \la m_{j}^{y} | , \ONE - |m_{j}^{y}\ra \la m_{j}^{y} | \}$. If Alice's state collapses on $|m_{j}^{y}\ra \la m_{j}^{y} |$, i.e. a photon is recorded by Bob while the scheme is set to make the projection $|m_{j}^{y}\ra \la m_{j}^{y} |$, he will claim that $x_y=j$. Otherwise, he will assume that $x_y \neq j$. A full von Neumann measurement is simulated in the case that
\begin{equation}
\label{eq:assumeID}
\sum_{j\in[d]} |m_{j}^{y}\ra \la m_{j}^{y} | = \ONE , \SP \SP \forall y\in[n].
\end{equation}

Let us consider the events where $x_y=j$ and define the total number of such events $X_1$. Let us also define $D_1$ as the number of "clicks" from the experiment in those cases. Likewise, let $X_2$ denote the number of events where $x_y \neq j$, and $D_2$ the clicks in those cases. Assuming uniform sampling, $(d-1)X_1 \approx X_2$.

To get an appropriate figure of merit for the experiment in this scenario, consider first the \textit{total experimental efficiency}:
\begin{equation}
\label{eq:defnu}
\nu := \frac{\# \text{ real clicks}}{\# \text{ theoretically expected clicks}}.
\end{equation}
Note that this efficiency does not assume anything about the inner-workings of the actual experimental setup, making it still compatible with the device independent approach. Let $q$ be the average success probability of a given strategy, then:
\begin{equation}
\label{eq:nu}
\nu = \frac{D_1 + D_2}{q X_1 + \left(\frac{1-q}{d-1}\right) X_2} = \frac{D_1 + D_2}{X_1}.
\end{equation}
To calculate the number of theoretically expected clicks, we use the average failure probability $\left(\frac{1-q}{d-1}\right)$ for simplicity, but without loss of generality. Furthermore, note that the average success probability is the ratio of the times Bob correctly guessed $x_y=j$, to the number of times he should have guessed it to be $x_y=j$:
\begin{equation}
\label{eq:D1X1}
\frac{D_1}{X_1} = \nu q,
\end{equation} Then, by combining equations \eqref{eq:nu} and \eqref{eq:D1X1}, we obtain:
\begin{equation}
\label{eq:FOM}
 q = \frac{D_1}{D_1+D_2},
\end{equation}
which will be our main experimental figure of merit to calculate the average success probability $q$ of the strategy. There are several benefits of using \eqref{eq:FOM} : (1) It has an easy operational interpretation as ``fraction of times Bob clicks correctly, compared to the total number of clicks", (2) since it only uses the data from the clicks, it is more experimentally friendly, not lowering the statistics due to detector malfunction or lossy channels, (3) from how it was derived, it does not assume the inner workings of the experiment, making it quite general, and most importantly (4) with the assumption of Eq. \eqref{eq:assumeID}, it is equivalent to the standard form of the ASP, i.e., Eq.
\eqref{eq:idasp}.

\subsection{Robustness of the ASP to Detection Efficiency and Poissonian Source}
In the previous section, we arrived at \eqref{eq:D1X1} by assuming that there was only one photon present in each experimental round. However, in our experimental setup we do not have a perfect single photon source, and multi-photon events can occur. The problem with having more than one photon in the system, is that our detector does not resolve the number of detected photons (otherwise this would not be an issue, and we would simply discard events with more than one photon). The nature of our detection event, the so-called ``click", is in fact the probabilistic event ``\textit{at least 1 photon detected}". Of course this event can be understood as the complement of the event ``\textit{no photon detected}". If we assume for a brief moment that $\nu=1$, and that there is a $n$-photon event, the probability of having a ``click``-event would be:
\begeq
\PP(\text{detecting at least 1 photon }|n\text{-photon event}) = 1 - (1-q)^k .
\endeq
 
Due to the nature of laser light formation, we consider a Poisson distribution for our photon production, with mean $\mu$ which can be experimentally tuned. Now, we return to the case of having experimental efficiency $\nu$. Imagine that there are $n$ photons with Alice's state $|\Psi\ra$ present, out of which only $k$ collapse onto the correct state $|\Phi\ra$ during the measurement process, and then each of the $k$ photons have a $\nu$ probability of being detected. Hence, the probability of at least one click would be:

\begin{equation}
\frac{D_1}{X_1} = \sum_{n=1}^{\infty} \mathbb{P}(n\text{ photons produced})\sum_{k=1}^n \mathbb{P}(k \text{ of the }n \text{ photons collapsing on } |\Phi\ra |\SP n\text{-photon event})) \mathbb{P}(\text{at aeast 1 detected}) .
\end{equation}

This expression is fully general. We now explicitly introduce the Poissonian distribution:
\begin{equation}
\label{eq:pexpnumu1}
\frac{D_1}{X_1} = \sum_{n=1}^\infty \left( \frac{ \mu^n e^{-\mu} }{n!} \right)   \sum_{k=1}^n \binom{n}{k} q^k (1-q)^{n-k} \left( 1 - (1-\nu)^k \right) .
\end{equation}

To simplify matters, we look just at the inner summation to get:
\begin{equation}
\label{eq:kphoton}
\sum_{k=1}^n \binom{n}{k} q^k (1-q)^{n-k} \left( 1 - (1-\nu)^k \right) = 1 - (1-\nu q)^n .
\end{equation}
Which is what we could have intuitively guessed since the beginning. If there are $k$ photons present, then the probability to detect at least 1 photon with a $\nu$ experimental efficiency is just $1 - (1-\nu q)^n$. Then, putting \eqref{eq:kphoton} into \eqref{eq:pexpnumu1} and carrying out the sum we obtain:
\begin{equation}
\label{eq:D1X1etamu}
\frac{D_1}{X_1} = 1 - e^{- \nu \mu q} .
\end{equation}
We note that while deriving this, we have been assuming the optimal QRAC strategy for the encoded states and measurement operators. In particular, $q$ does not depend on the inputs of Alice and Bob, (as shown in lemma \ref{lemma:magic}), i.e.\ every round performs the same as the average. In the same way, the average failing probability $\left(\frac{1-q}{d-1}\right)$ will be modified as:
\begeq
\label{eq:D2X2etamu}
\frac{D_2}{X_2} = 1 - e^{-  \nu \mu \left(\frac{1-q}{d-1}\right)} .
\endeq
 Then, if we divide the rhs of \eqref{eq:FOM} by $X_1$, and we use \eqref{eq:D1X1etamu} and \eqref{eq:D2X2etamu}, we obtain:
\begeq
\label{eq:FOMetamu}
 \frac{D_1}{D_1+D_2} = \frac{1 - e^{- \nu \mu 	q}}{1 - e^{- \nu \mu 	q} +(d-1)\left(1 - e^{-  \nu \mu \left(\frac{1-q}{d-1}\right)}\right)} ,
\endeq
which relates the theoretical average success probability of the strategy $q$, to our experimental figure of merit. We interpret this as follows: suppose Alice and Bob's strategy predicts an average success probability of $q$, and we experimentally know the value $\nu\mu$. Then, equation \eqref{eq:FOMetamu} gives the maximally allowed value of the figure of merit, assuming no other experimental errors. Experimentally, this allows us to fine-tune the $\mu$ parameter, to be sure the $Q_{512}Q_{2}$ value can be violated.

The first order term of \eqref{eq:FOMetamu} in the small parameter $\nu \mu$ ($0.052$ in our setup) is:
\begeq
\label{eq:FOMetamuseries}
\frac{D_1}{D_1+D_2} = q - \frac{1}{2} \left( \frac{1-q}{d-1} \right) q (dq-1) \nu \mu + O\left( (\nu \mu)^2 \right) .
\endeq

\bibliographystyle{apsrev4-1}
\bibliography{cqrefs}

\end{document}